\declaretheorem[name=Observation]{obs}
\declaretheorem[name=Definition]{defn}
\definecolor {processblue}{cmyk}{0.96,0,0,0}
\newcommandx{\unsure}[2][1=]{\todo[linecolor=red,backgroundcolor=red!25,bordercolor=red!75,#1]{#2}}
\newcommandx{\change}[2][1=]{\todo[linecolor=blue,backgroundcolor=blue!25,bordercolor=blue,#1]{#2}}
\newcommandx{\comment}[2][1=]{\todo[linecolor=OliveGreen,backgroundcolor=OliveGreen!25,bordercolor=OliveGreen,#1]{#2}}
\newcommandx{\improvement}[2][1=]{\todo[linecolor=Plum,backgroundcolor=Plum!25,bordercolor=Plum,#1]{#2}}
\newcommandx{\mat}[1]{\textcolor{red}{#1}}
\newcommandx{\rian}[1]{\textcolor{blue}{#1}}
\title{Recognizing $k$-Clique Extendible Orderings}
\author{Mathew Francis\inst{1} \and Rian Neogi\inst{2} \and Venkatesh Raman\inst{2}}
\institute{Indian Statistical Institute, Chennai Centre, India\\\email{mathew@isichennai.res.in} \and
The Institute of Mathematical Sciences, HBNI, Chennai, India \\\email{rianneogi@gmail.com, vraman@imsc.res.in}}
\newcommand{\BB}{{\mathcal B}}
\newcommand{\nn}{{\mathbb N}}
\newcommand{\ProbVerify}{\textsc{Verify $k$-C-E Ordering}}
\newcommand{\ProbFind}{\textsc{Find $k$-C-E Ordering}}
\newcommand{\defproblem}[3]{
  \vspace{3mm}
\noindent\fbox{
  \begin{minipage}{.95\textwidth}
  \begin{tabular*}{\textwidth}{@{\extracolsep{\fill}}lr} \textsc{#1}\\ \end{tabular*}
  {\bf{Input:}} #2  \\
  {\bf{Question:}} #3
  \end{minipage}
  }
  \vspace{2mm}
}
\renewenvironment{proof}{\textit{Proof.}}{\qed}
\begin{document}
\pagestyle{plain}
\maketitle


\begin{abstract}
We consider the complexity of recognizing $k$-clique-extendible graphs ($k$-C-E graphs) introduced by Spinrad (Efficient Graph Representations, AMS 2003), which are generalizations of comparability graphs.
A graph is $k$-clique-extendible if there is an ordering of the vertices such that whenever two $k$-sized overlapping cliques 
$A$ and $B$ have $k-1$ common vertices, and these common vertices appear between the two vertices $a,b\in (A\setminus B)\cup (B\setminus A)$ in the ordering, there is an edge between $a$ and $b$, implying that $A\cup B$ is a $(k+1)$-sized clique.  Such an ordering is said to be a $k$-C-E ordering.
These graphs arise in applications related to modelling preference relations.
Recently, it has been shown that a maximum sized clique in such a graph can be found in $n^{O(k)}$ time [Hamburger et al. 2017] when the ordering is given. 
When $k$ is $2$, such graphs are precisely the well-known class of comparability graphs and when $k$ is $3$ they are called triangle-extendible graphs. It has been shown that triangle-extendible graphs appear as induced subgraphs of visibility graphs of simple polygons, and the complexity of recognizing them has been mentioned as an open problem in the literature.

While comparability graphs (i.e. $2$-C-E graphs) can be recognized in polynomial time, we show that
recognizing $k$-C-E graphs is NP-hard for any fixed $k \geq 3$ and {\sc co-NP}-hard when $k$ is part of the input. While our NP-hardness reduction for $k \geq 4$ is from the betweenness problem, for $k=3$, our reduction is an intricate one from the $3$-colouring problem.
We also show that the problems of determining whether a given ordering of the vertices of a graph is a $k$-C-E ordering, and
that of finding an $\ell$-sized (or maximum sized) clique in a $k$-C-E graph, given a $k$-C-E ordering,
are complete for the parameterized complexity classes {\sc co-W[1]} and {\sc W[1]} respectively, when parameterized by $k$. 
However we show that the former is fixed-parameter tractable when parameterized by the treewidth of the graph.

\end{abstract}

\section{Introduction and Motivation}




An undirected graph is a comparability (or transitively orientable) graph if the edges can be oriented in a way that for any three vertices $u, v, w$ whenever there is a (directed) edge from $u$ to $v$ and an edge from $v$ to $w$, there is an edge from $u$ to $w$.
They are a well-studied class of graphs~\cite{golumbic, mohring} and they can be recognized in polynomial time~\cite{lin_trans}.
Spinrad~\cite{spinrad} generalized this class of graphs and introduced the notion of $k$-clique-extendible orderings (abbr. $k$-C-E ordering) on the vertices of a graph defined as follows.

\begin{defn}[$k$-C-E ordering, Spinrad~\cite{spinrad}]\label{def:kceordering}
	An  ordering $\phi$ of the vertices of a graph $G= (V,E)$ is a $k$-clique-extendible ordering (or $k$-C-E ordering) of $G$ if, whenever $X$ and $Y$ are two overlapping cliques of size $k$ such that $|X \cap Y| = k-1$, $X \setminus Y=\{a\}$, $Y \setminus X=\{b\}$, and all the vertices in $X\cap Y$ occur between $a$ and $b$ in $\phi$, we have $(a,b)\in E(G)$ and hence $X \cup Y$ is a $(k+1)$-clique.
\end{defn}

A graph $G$ is said to be \emph{$k$-clique-extendible} (\emph{$k$-C-E} for short) if there exists a $k$-clique-extendible ordering $\phi$ of $G$. 
It can be observed that comparability graphs are exactly the $2$-clique-extendible graphs. Spinrad~\cite{spinrad} observed that $3$-clique-extendible graphs, also called triangle-extendible graphs, arise in the visibility graphs of simple polygons and that a maximum clique can be found in polynomial time in such graphs if a $3$-clique-extendible ordering is given. This result has been generalized to obtain an $n^{O(k)}$ algorithm for finding a maximum clique in $k$-C-E graphs (given with a $k$-C-E ordering) on $n$ vertices~\cite{doublethreshold}. 
The question of whether there is a polynomial time algorithm to recognise $3$-C-E graphs has been mentioned as an open problem~\cite{spinrad}.

We believe that $k$-C-E graphs are natural generalizations of comparability graphs and our main contribution in this paper is a serious study of this class of graphs. Our results show that recognizing $k$-C-E graphs is NP-hard for any fixed $k \geq 3$ and also {\sc co-NP}-hard when $k$ is part of the input. This solves the open problem regarding the complexity of recognizing $3$-C-E graphs and we hope that our results will trigger further study of $k$-C-E graphs in general.

If an ordering of the vertices is given, then it is easy to get an $n^{O(k)}$ algorithm to determine whether it is a $k$-C-E ordering of the graph (see \cref{section:verify}). We show that this problem is {\sc co-NP}-complete and also complete for the parameterized complexity class {\sc co-W[1]}. The reduction also implies that unless the Exponential Time Hypothesis fails, this problem does not have an $f(k)n^{o(k)}$ algorithm for any function $f$ of $k$. However, we show that the problem is fixed-parameter tractable when parameterized by the treewidth of the graph, that is, there is an $f(tw) n^{O(1)}$ algorithm for the problem, where $tw$ is the treewidth of the graph (see \cref{section:prelims} for definitions).


\medskip

\noindent
{\bf Organization of the paper.} In the next section, we give the necessary notation and definitions.
In \cref{section:basic}, we prove some results about $k$-C-E graphs which are used in our reductions in later sections. In \cref{section:verify}, we show that the problem of checking whether a given ordering is a $k$-C-E ordering is {\sc co-NP}-complete and {\sc co-W[1]}-complete. In this section, we also show that the problem is fixed-parameter tractable when parameterized by the treewidth of the graph. In \cref{section:clique_in_kceo} we show that the $n^{O(k)}$ algorithm for finding maximum clique in a $k$-C-E graph~\cite{doublethreshold} is likely optimal. \cref{section:ordering} gives our main NP-hardness reductions for the problem of recognizing $k$-C-E graphs. We give two reductions, one for $k=3$ and another for $k\geq 4$. We list some open problems in \cref{section:conclusion}.

\section{Preliminaries} \label{section:prelims}

\begin{defn}[Fixed-Parameter Tractability]
\label{defn:fpt}
A parameterized problem (or a language) $L \subseteq \Sigma^* \times \nn$ is said to be fixed-parameter tractable (FPT) if there exists an algorithm $\BB$, a constant $c$ and a computable function $f: \nn \times \nn$ such that given any $(I,k) \in \Sigma^* \times \nn$, $\BB$ runs in at most $f(k)\cdot |I|^c$ time and decides correctly whether $(I,k) \in L$ or not. Here $f$ is a function only of $k$, and $c$ is a constant independent of $k$.
We call algorithm $\BB$ as fixed-parameter algorithm, and we also denote a runtime like $f(k) |I|^c$, a FPT runtime. 
FPT also denotes the class of fixed-parameter tractable problems.
Here $|I|$ is the size of the input and $k$ is the parameter. 
\end{defn}
	
There is also an intractability (hardness) theory in parameterized complexity that is characterized by a hierarchy of complexity classes $FPT \subseteq W[1] \subseteq W[2] \subseteq \cdots \subseteq XP$. It is believed that the containments are strict and there are canonical complete problems under parameterized reductions. 
	
\begin{defn}[{Parameterized Reduction, $W[1]$, $W[1]$-complete}]
There is a parameterized reduction from a parameterized problem $P_1$ to a parameterized problem $P_2$, if every instance $(x,k)$ of $P_1$ can be transformed in FPT time to an equivalent instance $(x', k')$ where $k'$ is just a function of $k$.
The {\sc Clique} problem that asks whether a given undirected graph has a clique of size $k$ is a canonical $W[1]$-complete problem, where $k$ is the parameter.  Parameterized problems that have a parameterized reduction to the {\sc Clique} problem form the class $W[1]$.
\end{defn}
We refer readers to the recent textbook~\cite{pa_book} for further discussions on parameterized complexity.
	
\begin{defn}[Tree-decomposition and treewidth~\cite{treewidth}]
A \emph{tree decomposition} of a graph $G$ is a pair $\mathcal{T} = (T, \{X_t \}_{t \in V(T)})$, where $T$ is a tree whose every node $t$ is assigned a vertex subset $X_t \subseteq V(G)$, called a bag, such that the following three conditions hold : $(i)\bigcup_{t \in V(T)} X_t = V(G)$. $(ii)$ For every $uv \in E(G)$, there exists a node $t$ of $T$ such that bag $X_t$ contains both $u$ and $v$. $(iii)$ For every $u \in V(G)$, the set $T_u = \{t \in V(T) \mid\ u \in X_t\}$ induces a connected subtree of $T$.
The width of tree decomposition $\mathcal{T} = (T, \{X_t \}_{t \in V(T)})$ equals $\max_{t \in V(T)} \{ |X_t| -1 \}$. The treewidth of a graph $G$, denoted by $tw(G)$, is the minimum possible width of a tree decomposition of $G$.
\end{defn}

The following conjecture, known as the Exponential Time Hypothesis, is used to provide lower bounds for hard problems.
\smallskip

\noindent
{\textbf{Exponential Time Hypothesis} (ETH)~\cite{eth_defn}: There is a positive real $s$ such that $3$-CNF-SAT cannot be solved in time $2^{sn}(n+m)^{O(1)}$ where $n$ is the number of variables, and $m$ is the number of clauses in the formula.}
\medskip

See also~\cite{eth_lowerbounds} for a survey of various lower bound results using ETH.
\medskip

All graphs considered in this paper are undirected and simple. Given a graph $G$, by $V(G)$ we denote the set of vertices in the graph and by $E(G)$ we denote the set of edges in the graph.
Let $G$ be a graph. For a subset of vertices $S \subseteq V(G)$, we define $G[S]$ as the induced subgraph of $G$ having vertex set $S$. 

Given a linear order $\phi$ of a set $A$, we write $a<_{\phi} b$ to mean that $a$ and $b$ are two elements of $A$ such that $a$ occurs before $b$ in $\phi$. Also, we write $\phi=(a_1,a_2,\ldots,a_n)$ to mean that $A=\{a_1,a_2,\ldots,a_n\}$ and $a_1<_{\phi} a_2<_{\phi}\cdots<_{\phi} a_n$.
We say that a vertex $b$ comes between vertices $a$ and $c$ in $\phi$ if $a <_{\phi} b <_{\phi} c$ or $c <_{\phi} b <_{\phi} a$. By $\phi^{-1}$ we denote the reverse of $\phi$, that is, $a <_{\phi^{-1}} b$ if and only if $b <_{\phi} a$.

Given an ordering $\phi$ of a set $V$ and a set $S \subseteq V$, we define $\phi|_S$ to be the ordering of the elements of $S$ in the order in which they occur in $\phi$. Further, we say that $a,b \in S$ are the endpoints of $S$ if $a$ is the first element of $\phi|_S$ and $b$ is the last element of $\phi|_S$.
Given two disjoint sets $A$ and $B$, and orderings $\phi _1  = (a_1,a_2,\ldots,a_n)$ of the set $A$ and $\phi _2  = (b_1,b_2,\ldots,b_m)$ of the set $B$, we define $\phi _1 + \phi _2 = (a_1,a_2,\ldots,a_n,b_1,b_2,\ldots,b_m)$ that is an ordering on the set $A \cup B$, that is, $+$ is the concatenation operator on orderings. 
We will abuse notation to allow sets to be used with the concatenation operator: if $\gamma$ is an expression that is a concatenation of orderings and sets, we say that an ordering $\phi$ is of the form $\gamma$, if there exists an ordering for each set appearing in $\gamma$ such that replacing each set with its corresponding ordering in $\gamma$ yields the ordering $\phi$.

A clique in a graph is a set of vertices that are pairwise adjacent in the graph. An independent set is a set of vertices that are pairwise non-adjacent. 
Given subsets $S,A,B \subseteq V(G)$, we say that $S$ separates $A$ and $B$ if there is no path from $A$ to $B$ in $G[V(G) \setminus S]$.
For a pair $u,v$ of nonadjacent vertices of a graph, by identifying $u$ with $v$, we mean adding the edges $(u,w)$ for all $w \in N(v)\setminus N(u)$ and then deleting $v$. 

We denote by $K^-_n$ the graph obtained by removing an edge from the complete graph $K_n$ on $n$ vertices. Given an ordering $\phi$ of the vertices of a graph $G$, we say that an induced subgraph $H$ of $G$ is an \emph{ordered $K^-_t$} in $\phi$ if $\phi|_{V(H)}=(h_1,h_2,\ldots,h_t)$ and $E(H)=\{(h_i,h_j)\mid 1\leq i<j\leq t\}\setminus \{(h_1,h_t)\}$. 
It follows that an ordering of the vertices of a graph is a $k$-C-E ordering if and only if it contains no ordered $K^-_{k+1}$.

\section{Basic Results} \label{section:basic}

We start with the following observations which are used throughout the paper.

\begin{obs} \label{obs:reverse}
An ordering $\phi = \{v_1,v_2,\ldots,v_n\}$ is a $k$-C-E ordering, if and only if its reverse ordering, $\phi^{-1} = \{v_n,v_{n-1},\ldots,v_1\}$ is also a $k$-C-E ordering.
\end{obs}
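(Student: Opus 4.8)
The plan is to prove the equivalence by exploiting the characterization stated just before the observation: an ordering is a $k$-C-E ordering if and only if it contains no ordered $K^-_{k+1}$. Since the statement is symmetric in $\phi$ and $\phi^{-1}$ (reversing twice returns the original ordering), it suffices to prove only one direction, say that if $\phi$ is a $k$-C-E ordering then so is $\phi^{-1}$; the converse follows by applying this direction to $\phi^{-1}$ and using $(\phi^{-1})^{-1}=\phi$.

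The key step is to observe what happens to an ordered $K^-_{k+1}$ under reversal. Suppose $\phi^{-1}$ contains an ordered $K^-_{k+1}$, witnessed by an induced subgraph $H$ with $\phi^{-1}|_{V(H)}=(h_1,h_2,\ldots,h_{k+1})$ and $E(H)=\{(h_i,h_j)\mid 1\leq i<j\leq k+1\}\setminus\{(h_1,h_{k+1})\}$. I would then look at the same vertex set $V(H)$ under the original ordering $\phi$. By the definition of the reverse ordering, $\phi|_{V(H)}=(h_{k+1},h_k,\ldots,h_1)$, i.e. the vertices appear in exactly the opposite order. Relabelling $h'_i:=h_{k+2-i}$, the edge set $E(H)$ is unchanged (it depends only on $H$ as a graph, not on the ordering), and the missing edge $(h_1,h_{k+1})=(h'_{k+1},h'_1)$ is precisely the pair consisting of the first and last vertices of $\phi|_{V(H)}$. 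Hence $H$ is an ordered $K^-_{k+1}$ in $\phi$ as well.

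This shows that $H$ is an ordered $K^-_{k+1}$ in $\phi$ if and only if it is an ordered $K^-_{k+1}$ in $\phi^{-1}$, because the property of being an ordered $K^-_{k+1}$ is invariant under reversing the ordering: the two endpoints of the vertex set (the first and last elements) swap roles, but the pair they form is the same, and all internal adjacencies are required regardless. Therefore $\phi$ contains an ordered $K^-_{k+1}$ if and only if $\phi^{-1}$ does, and by the characterization $\phi$ is a $k$-C-E ordering if and only if $\phi^{-1}$ is.

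I do not expect any real obstacle here; this is essentially a symmetry argument. The only point requiring a moment of care is to verify that the definition of an ordered $K^-_t$ is genuinely symmetric under reversal — specifically that the non-edge is always between the endpoints of the vertex set and the endpoints are interchanged under reversal — so that no asymmetry in the roles of $h_1$ and $h_t$ creeps in. Once that is noted, the proof is immediate.
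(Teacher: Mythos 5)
Your proof is correct: the paper states this observation without any proof, treating it as immediate, and your argument via the ordered $K^-_{k+1}$ characterization (the non-edge joins the two endpoints of $\phi|_{V(H)}$, a pair that is invariant under reversal) is exactly the intended justification. The reduction of the ``if and only if'' to a single direction via $(\phi^{-1})^{-1}=\phi$ is also fine.
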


\begin{obs} \label{obs:hereditary}
Given a graph $G$ and an induced subgraph $H$ of $G$, if an ordering $\phi $ is a $k$-C-E ordering of $G$, then $\phi|_{V(H)}$ is a $k$-C-E ordering of $H$. Thus every induced subgraph of a $k$-clique-extendible graph is also $k$-clique-extendible.
\end{obs}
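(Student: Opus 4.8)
The plan is to reduce the statement to the forbidden-pattern characterization noted just above, namely that an ordering is a $k$-C-E ordering if and only if it contains no ordered $K^-_{k+1}$. With this reformulation the claim becomes a routine transfer argument: I would argue the contrapositive, showing that any ordered $K^-_{k+1}$ appearing in $\phi|_{V(H)}$ would already appear in $\phi$, so the absence of such a pattern in $\phi$ forces its absence in $\phi|_{V(H)}$.

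Concretely, first I would suppose for contradiction that $\phi|_{V(H)}$ is \emph{not} a $k$-C-E ordering of $H$. By the characterization, $\phi|_{V(H)}$ then contains an ordered $K^-_{k+1}$, that is, a set $S\subseteq V(H)$ of $k+1$ vertices inducing $K^-_{k+1}$ in $H$ and ordered in $\phi|_{V(H)}$ as $(h_1,\ldots,h_{k+1})$ with all edges present except $(h_1,h_{k+1})$. The key step is to observe that this same set $S$ witnesses an ordered $K^-_{k+1}$ in $\phi$: since $H$ is an induced subgraph of $G$, the adjacencies and non-adjacencies among the vertices of $S$ are identical in $G$ and in $H$, so $G[S]$ is again $K^-_{k+1}$; and since $\phi|_{V(H)}$ is by definition the restriction of $\phi$ to $V(H)$, the relative order of the vertices of $S\subseteq V(H)$ is the same in $\phi$ as in $\phi|_{V(H)}$. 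Hence $S$ is an ordered $K^-_{k+1}$ in $\phi$, contradicting the hypothesis that $\phi$ is a $k$-C-E ordering of $G$.

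The second sentence of the statement is then immediate: if $G$ is $k$-clique-extendible, it admits some $k$-C-E ordering $\phi$, and the first part shows that $\phi|_{V(H)}$ is a $k$-C-E ordering of any induced subgraph $H$, so $H$ is $k$-clique-extendible. I expect no genuine obstacle here; the only point requiring care is the order-preservation observation (that restricting $\phi$ does not alter the relative order of vertices lying in $V(H)$) together with the use of ``induced'' to preserve both edges and non-edges. If one preferred to avoid the forbidden-pattern characterization, the same two observations can be applied directly to \cref{def:kceordering}: given overlapping $k$-cliques $X,Y$ of $H$ with the common vertices between $a$ and $b$ in $\phi|_{V(H)}$, the cliques persist in $G$, the betweenness persists in $\phi$, the edge $(a,b)$ is then guaranteed in $G$, and finally belongs to $E(H)$ because $H$ is induced.
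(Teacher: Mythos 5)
Your proof is correct; the paper states this observation without proof, treating it as immediate, and your argument via the ordered $K^-_{k+1}$ characterization (using that induced subgraphs preserve both adjacencies and non-adjacencies, and that restriction preserves relative order) is exactly the routine justification the paper intends. No issues.
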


\begin{obs}\label{obs:coloring}
If $G$ is a $k$-colourable graph with colour classes $V_1,\ldots,V_k$, then any ordering $\phi$ of $V(G)$ of the form $V_1 + V_2 +\cdots+ V_k$ is a $k$-C-E ordering of $G$. Thus, every $k$-colourable graph is $k$-clique-extendible.
\end{obs}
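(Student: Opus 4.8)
The plan is to use the reformulation established at the end of \cref{section:prelims}: an ordering is a $k$-C-E ordering if and only if it contains no ordered $K^-_{k+1}$. So it suffices to show that an ordering $\phi$ of the form $V_1 + V_2 + \cdots + V_k$ can contain no ordered $K^-_{k+1}$.

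First I would assume, for contradiction, that $\phi$ contains an ordered $K^-_{k+1}$ on vertices $h_1 <_\phi h_2 <_\phi \cdots <_\phi h_{k+1}$, where by definition the only non-edge among them is $(h_1,h_{k+1})$. The key structural observation is that both $\{h_1,\ldots,h_k\}$ and $\{h_2,\ldots,h_{k+1}\}$ are cliques of size $k$: the single missing edge $(h_1,h_{k+1})$ is avoided by each of these two sets, since $h_{k+1}$ is absent from the first and $h_1$ is absent from the second.

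Next I would bring in the colouring. Since $V_1,\ldots,V_k$ is a proper $k$-colouring (each $V_i$ is independent) and there are exactly $k$ colour classes, any clique of size $k$ must contain exactly one vertex from each class. Now I use the block structure of $\phi$: because the classes appear as consecutive blocks in the order $V_1,V_2,\ldots,V_k$, a size-$k$ clique whose vertices are listed in increasing $\phi$-order must have its $i$-th vertex in $V_i$. Applying this to $\{h_1,\ldots,h_k\}$ gives $h_i\in V_i$ for each $i$, and applying it to $\{h_2,\ldots,h_{k+1}\}$ gives $h_{i+1}\in V_i$ for each $i$. In particular $h_2\in V_2$ from the first clique and $h_2\in V_1$ from the second, contradicting the disjointness of distinct colour classes (here we use $k\geq 2$). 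This contradiction shows $\phi$ has no ordered $K^-_{k+1}$, hence it is a $k$-C-E ordering; the final sentence of the statement then follows, since such a $\phi$ always exists for a $k$-colourable graph.

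The step I expect to be the main (though modest) obstacle is the one linking the per-clique colour distribution to positions in $\phi$ — namely arguing that ``one vertex per block, read in $\phi$-order'' forces the $i$-th vertex into $V_i$; everything else is bookkeeping. As an alternative route that avoids the ordered $K^-_{k+1}$ characterization, I could argue directly from \cref{def:kceordering}: the common part $X\cap Y$ of two overlapping $k$-cliques omits exactly one colour $c$, forcing both $a$ and $b$ into $V_c$; since $V_c$ is a contiguous block of $\phi$, every vertex between $a$ and $b$ — in particular all of $X\cap Y$ — also lies in $V_c$, which is impossible because $X\cap Y$ is a nonempty clique inside the independent set $V_c$.
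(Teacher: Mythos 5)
Your main argument is correct. The paper states this as an unproved observation, so there is no official proof to compare against, but your route via the ordered $K^-_{k+1}$ characterization is sound: both $\{h_1,\dots,h_k\}$ and $\{h_2,\dots,h_{k+1}\}$ are $k$-cliques, a $k$-clique meets each colour class exactly once, and the block structure of $\phi$ then pins the $i$-th vertex (in $\phi$-order) of any $k$-clique to $V_i$, giving the contradiction $h_2\in V_1\cap V_2$. The step you flag as the potential obstacle is fine as you state it.

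One small remark on your alternative route: the final contradiction ``$X\cap Y$ is a nonempty clique inside the independent set $V_c$'' only works as stated for $k\geq 3$, since for $k=2$ the set $X\cap Y$ is a single vertex and a singleton sitting in an independent set is not yet a contradiction. The argument still closes for $k=2$ because that single vertex of $X\cap Y$ lands in $V_c$ together with its neighbour $a$, violating independence of $V_c$; you would just need to phrase the contradiction as ``$X=(X\cap Y)\cup\{a\}$ is a clique of size $k\geq 2$ contained in $V_c$'' to cover all cases uniformly.
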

%

It is not difficult to see that any $k$-clique-extendible ordering of a graph is also a $(k+1)$-clique-extendible ordering of it. Thus, every $k$-clique-extendible graph is also a $(k+1)$-clique-extendible graph. Note that every graph on $n$ vertices is trivially $n$-clique-extendible. So the notion of $k$-clique-extendibility gives rise to a hierarchy of graph classes starting with comparability graphs and ending with the entire set of graphs. This motivates the use of $k$ as a graph parameter.



We prove a lemma that will help us construct a $k$-C-E ordering of a graph from $k$-C-E orderings of its subgraphs.

\begin{lemma}\label{lem:separator2}
For a graph $G$, let $V_1, V_2 \subseteq V(G)$ and let $\sigma _1, \sigma _2$ be $k$-C-E orderings of $G[V_1]$ and $G[V_2]$ respectively for any $k\ge 2$, such that the following hold
\begin{enumerate}
	\item  $V_1 \cup V_2 = V(G)$
	\item  $V_1 \cap V_2$ separates $G$ into components $V_1\setminus V_2$ and $V_2\setminus V_1$
	\item  $\sigma_1|_{V_1 \cap V_2} = \sigma_2|_{V_1 \cap V_2}$
	\item if $C$ is a $(k-1)$-clique in $V_1 \cap V_2$ and $u,v$ are the endpoints of $C$ in $\sigma_1$, then every vertex $a \in V_1\setminus V_2$ that is adjacent to all of $C$ satisfies $u <_{\sigma _1} a <_{\sigma _1} v$
\end{enumerate}
	Then $G$ has a $k$-C-E ordering $\phi$ such that $\phi|_{V_1} = \sigma_1$ and $\phi|_{V_2} = \sigma_2$.
\end{lemma}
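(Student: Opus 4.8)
The plan is to construct the desired ordering $\phi$ explicitly by merging $\sigma_1$ and $\sigma_2$ along their common restriction to $V_1\cap V_2$, and then to verify that $\phi$ avoids the forbidden pattern. Write $S=V_1\cap V_2$, $A=V_1\setminus V_2$ and $B=V_2\setminus V_1$. By condition~2 there are no edges between $A$ and $B$, and by condition~3 the orderings $\sigma_1$ and $\sigma_2$ induce the same order on $S$, say $(s_1,s_2,\ldots,s_m)$. These vertices split $A$ into blocks $A_0,A_1,\ldots,A_m$ (where $A_i$ consists of the $A$-vertices lying between $s_i$ and $s_{i+1}$ in $\sigma_1$) and similarly split $B$ into $B_0,\ldots,B_m$. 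I would define
\[
\phi=\sigma_1|_{A_0}+\sigma_2|_{B_0}+(s_1)+\sigma_1|_{A_1}+\sigma_2|_{B_1}+(s_2)+\cdots+(s_m)+\sigma_1|_{A_m}+\sigma_2|_{B_m},
\]
so that deleting all $B$-vertices recovers $\sigma_1$ and deleting all $A$-vertices recovers $\sigma_2$; hence $\phi|_{V_1}=\sigma_1$ and $\phi|_{V_2}=\sigma_2$ as required. The relative order chosen between an $A$-block and a $B$-block in the same gap is immaterial to the argument below.

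To show $\phi$ is a $k$-C-E ordering I would use the characterization from the preliminaries: it suffices to prove that $\phi$ contains no ordered $K^-_{k+1}$. Suppose to the contrary that $h_1<_\phi h_2<_\phi\cdots<_\phi h_{k+1}$ induce a $K^-_{k+1}$, so every pair among them is adjacent except $(h_1,h_{k+1})$. If all of $h_1,\ldots,h_{k+1}$ lie in $V_1$, then they form an ordered $K^-_{k+1}$ in $\phi|_{V_1}=\sigma_1$, contradicting that $\sigma_1$ is a $k$-C-E ordering; the symmetric argument rules out the case that they all lie in $V_2$. So some $h_i$ lies in $A$ and some $h_j$ lies in $B$. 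Since $A$ and $B$ have no edges between them, any such pair is non-adjacent, and the unique non-adjacent pair is $(h_1,h_{k+1})$; therefore $\{h_1,h_{k+1}\}$ consists of one vertex of $A$ and one of $B$, while every middle vertex $h_2,\ldots,h_k$, being adjacent to both endpoints, must avoid $A\cup B$ and hence lies in $S$.

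The crux is then to derive a contradiction from condition~4 applied to the $(k-1)$-clique $C=\{h_2,\ldots,h_k\}\subseteq S$. Let $u,v$ be the endpoints of $C$; since $C\subseteq S$ and $\sigma_1|_S=\sigma_2|_S$, these endpoints agree in both orderings, and because $\phi|_S$ is this same order we have $u=h_2$ and $v=h_k$. Suppose first $h_1\in A=V_1\setminus V_2$; as $h_1$ is adjacent to all of $C$, condition~4 forces $u<_{\sigma_1}h_1$, whereas $h_1<_\phi h_2=u$ together with $\phi|_{V_1}=\sigma_1$ gives $h_1<_{\sigma_1}u$, a contradiction. If instead $h_{k+1}\in A$, condition~4 applied to the same clique forces $h_{k+1}<_{\sigma_1}v$, while $v=h_k<_\phi h_{k+1}$ gives $v<_{\sigma_1}h_{k+1}$, again a contradiction. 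In either subcase the $A$-endpoint is the one that triggers condition~4, which is exactly why the hypothesis is only needed for $V_1\setminus V_2$ and no symmetric assumption on $B$ is required. I expect this mixed case, and in particular pinning down that the middle vertices are forced into the separator $S$ so that condition~4 becomes applicable, to be the main obstacle; the remaining cases reduce directly to the hypothesis that $\sigma_1$ and $\sigma_2$ are themselves $k$-C-E orderings.
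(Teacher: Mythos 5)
Your proposal is correct and follows essentially the same route as the paper: the same interleaved merge of $\sigma_1$ and $\sigma_2$ along the common order of $V_1\cap V_2$, the same elimination of the cases $Q\subseteq V_1$ and $Q\subseteq V_2$, and the same use of condition~4 to show the $A$-endpoint of the putative ordered $K^-_{k+1}$ would have to lie strictly inside the span of its middle clique. The only cosmetic difference is that you split the final contradiction into the two subcases $h_1\in A$ and $h_{k+1}\in A$, where the paper phrases it once as ``$a$ lies between the endpoints of $Q\setminus\{a,b\}$, contradicting that $a$ is an endpoint of $Q$.''
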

\begin{proof}
Let $\sigma_1|_{V_1 \cap V_2} = \sigma_2|_{V_1 \cap V_2} = (s_1, s_2, \ldots, s_p)$. Let $A_i$ be the induced ordering between $s_i$ and $s_{i+1}$ in $\sigma_1$ so we can rewrite $\sigma_1$ as 
\[\sigma_1 = A_0+s_1+A_1+s_2+A_2+\ldots + A_{p-1}+s_p+A_p\]
Similarly, let $B_i$ be the induced ordering between $s_i$ and $s_{i+1}$ in $\sigma_2$ so that 
\[\sigma_2 = B_0+s_1+B_1+s_2+B_2+\ldots +B_{p-1}+ s_p+B_p\] 
Consider the following ordering $\phi$ of $V_1 \cup V_2$.
\[\phi = A_0+B_0+s_1+A_1+B_1+s_2+A_2+B_2+\ldots + A_{p-1}+B_{p-1}+s_p+A_p+B_p\] 

That is, we `interleave' each $A_i$ and $B_i$ between the corresponding $s_i$ and $s_{i+1}$.
The ordering $\phi$ is constructed such that it preserves the internal ordering of $\sigma_1$ in $V_1$ and $\sigma_2$ in $V_2$, that is, $\phi|_{V_1} = \sigma_1$ and $\phi|_{V_2} = \sigma_2$ and thus also $\phi|_{V_1\cap V_2} = \sigma_1|_{V_1 \cap V_2} = \sigma_2|_{V_1 \cap V_2}$.
We will prove that $\phi$ is a $k$-C-E ordering of $G$. Suppose not. Then there exists a set $Q\subseteq V(G)$ that forms an ordered $K^-_{k+1}$ in $\phi$. Let $a,b$ be the endpoints of $Q$ in $\phi$.
It can't be the case that $Q \subseteq V_1$, otherwise since $\phi|_{V_1} = \sigma_1$, $Q$ would be an ordered $K_{k+1}^-$ in $\sigma_1$, contradicting the fact that $\sigma_1$ is a $k$-C-E ordering of $G[V_1]$. Similarly, it can't be the case that $Q \subseteq V_2$.
So, $Q \cap (V_1 \setminus V_2) \neq \emptyset$ and $Q \cap (V_2 \setminus V_1) \neq \emptyset$. As $V_1\cap V_2$ separates $V_1$ and $V_2$, no vertex in $V_1\setminus V_2$ is adjacent to any vertex $V_2\setminus V_1$. Since the only two vertices in $Q$ that do not have an edge between them are $a$ and $b$, we can assume without loss of generality that $a\in V_1\setminus V_2$ and $b\in V_2\setminus V_1$, and we further get that $Q\setminus\{a,b\}\subseteq V_1\cap V_2$. Since $Q\setminus\{a,b\}$ is a $(k-1)$ sized clique and $a$ is adjacent to all the vertices of $Q\setminus\{a,b\}$, by the last condition in the lemma, it must be the case that $a$ lies between the two endpoints of $Q\setminus\{a,b\}$ in $\phi$, contradicting the fact that $a$ is an endpoint of $Q$ in $\phi$.
\end{proof}


\bigskip

\noindent
\textbf{Forbidden Subgraph.} We construct a forbidden subgraph for the class of $k$-clique-extendible graphs which is used to build gadgets in our NP-hard reductions.

For a positive integer $k$, let $K = \{v_1,v_2,\ldots,v_{2k-1}\}$ be a $(2k-1)$ sized clique. For every pair of vertices $v_i$ and $v_j$ in $K$, add a vertex $u_{i,j}$ such that $u_{i,j}$ is adjacent to every vertex in $K$ except $v_i$ and $v_j$. Let $I = \{u_{i,j} \mid i,j \in [2k-1], i<j\}$ be the set of all such $u_{i,j}$ for every pair of vertices in $K$. Let $F_k$ be the graph thus obtained having vertex set $K \cup I$.
See Figure~\ref{figure_fk} for an example that demonstrates the adjacencies between $I$ and $K$ when $k=3$. 

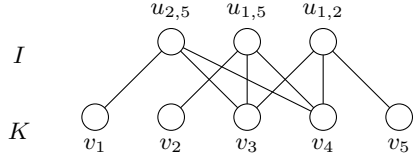
\begin{figure}[t!]
\begin{center}
\begin {tikzpicture}
\begin{scope} [vertex style/.style={draw,
                                       circle,
                                       minimum size=3mm,
                                       inner sep=0pt,
                                       outer sep=0pt,
										text depth=10pt
}] 
	  \path	\foreach \i in {1,...,5} 
					{
						(\i,0) coordinate[vertex style, label=below:$v_\i$](v\i)
					}
					(2,1) coordinate[vertex style, label=above:$u_{2,5}$](u25)
					(3,1) coordinate[vertex style, label=above:$u_{1,5}$](u15)
					(4,1) coordinate[vertex style, label=above:$u_{1,2}$](u12)
					(0,-0.4) coordinate[label=$K$](K)
					(0,0.6) coordinate[label=$I$](I)
	;
    \end{scope}

     \begin{scope} [edge style/.style={draw=black}]
	\foreach \i in {3,...,5}
	{
			\draw[edge style] (v\i) -- (u12);
	}
	\foreach \i in {2,...,4}
	{
			\draw[edge style] (v\i) -- (u15);
	}
	\draw[edge style] (v1) -- (u25);
	\draw[edge style] (v3) -- (u25);
	\draw[edge style] (v4) -- (u25);
     \end{scope}
\end{tikzpicture}
	\caption{Diagram depicting $F_3$. Edges in the clique are not shown, and only $3$ of the $u_{i,j}$ vertices are shown to avoid visual clutter.}
\label{figure_fk}
\end{center}
\end{figure}

\begin{lemma}\label{prop:forbidden}
$F_k$ is not $k$-clique-extendible.
\end{lemma}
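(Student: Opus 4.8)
The plan is to use the forbidden-subgraph characterization stated just before \cref{section:basic}: an ordering is a $k$-C-E ordering if and only if it contains no ordered $K^-_{k+1}$. So I would assume for contradiction that $F_k$ admits a $k$-C-E ordering $\phi$ (here $k\ge 2$, which is the relevant range) and then exhibit an ordered $K^-_{k+1}$ inside $\phi$, contradicting the assumption.

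First I would record a trivial reduction that makes the target more flexible: it suffices to find, for \emph{some} $m\ge k+1$, an ordered $K^-_m$ in $\phi$. Indeed, if $h_1<_\phi h_2<_\phi\cdots<_\phi h_m$ induce an ordered $K^-_m$ (all pairs adjacent except $(h_1,h_m)$), then the first $k$ of them together with the last, namely $\{h_1,\ldots,h_k,h_m\}$, already form an ordered $K^-_{k+1}$: every pair among these is an edge except $(h_1,h_m)$, which is the missing edge and sits at the two $\phi$-extremes. Thus my real goal is to produce a clique-minus-one-edge on at least $k+1$ vertices whose unique non-edge joins its two $\phi$-extreme vertices.

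To build such a subgraph I would let $v_a$ and $v_b$ be the first and last vertices of the clique $K$ in $\phi$, and then focus on the single vertex $u=u_{a,b}$, which is adjacent to all of $K':=K\setminus\{v_a,v_b\}$ (a clique of size $2k-3$) and non-adjacent to exactly $v_a$ and $v_b$. Partition $K'$ by position relative to $u$: let $L$ be the vertices of $K'$ lying before $u$ in $\phi$ and $R$ those lying after. Since $|L|+|R|=2k-3$, at least one of them has size $\ge k-1$. If $|R|\ge k-1$, then $\{u\}\cup R\cup\{v_b\}$ is an ordered $K^-_{|R|+2}$: its vertices are pairwise adjacent except the pair $(u,v_b)$, and in $\phi$ the vertex $u$ comes first (it precedes all of $R$ by definition, and precedes $v_b$ since $R\ne\emptyset$ forces some $K'$-vertex strictly between $u$ and the last $K$-vertex $v_b$), while $v_b$ comes last (being the last $K$-vertex). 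As $|R|+2\ge k+1$, the reduction above yields an ordered $K^-_{k+1}$. The case $|L|\ge k-1$ is symmetric, using $\{v_a\}\cup L\cup\{u\}$ with $v_a$ first and $u$ last, and can also be read off from the previous case via \cref{obs:reverse}.

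The step that I expect to carry the real content is this positional one. A priori $u$ could be interleaved arbitrarily among the vertices of $K'$, so neither $\{v_a,u\}\cup K'$ nor $\{u,v_b\}\cup K'$ need have its non-edge at the $\phi$-extremes, and a naive attempt stalls. The resolution is exactly the pigeonhole bound $|L|+|R|=2k-3$, which guarantees that one side of $u$ contains at least $k-1$ vertices of $K'$, leaving enough room to assemble the required $K^-$ of size $\ge k+1$ with $u$ at one end and its forbidden neighbour $v_a$ or $v_b$ at the other. I would also note that the whole argument uses only the vertex $u_{a,b}$ and the clique $K$, so it is insensitive to whatever adjacencies hold among the vertices of $I$.
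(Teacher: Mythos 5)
Your proof is correct and is essentially the paper's own argument: both single out the vertex $u_{a,b}$ indexed by the first and last vertices of $K$ in the ordering, and both observe (you via pigeonhole on $|L|+|R|=2k-3$, the paper via the position of $u_{a,b}$ relative to the median vertex $v_{\sigma(k)}$ of $K$) that $k-1$ vertices of $K\setminus\{v_a,v_b\}$ lie on one side of $u_{a,b}$, yielding an ordered $K^-_{k+1}$ with $u_{a,b}$ at one extreme and the corresponding non-neighbour $v_a$ or $v_b$ at the other. The only cosmetic difference is that you build a possibly larger ordered $K^-_m$ and trim it, whereas the paper selects exactly $k+1$ vertices directly.
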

\begin{proof}
Suppose not. Let $\phi$ 
be a $k$-C-E ordering of $F_k$. Let $\phi|_K = (v_{\sigma(1)}, v_{\sigma(2)},\ldots,$ $v_{\sigma(2k-1)})$, where $\sigma$ is a permutation on $\{1,2,\ldots,2k-1\}$.
If the vertex $u_{\sigma(1),\sigma(2k-1)}$ comes after $v_{\sigma(k)}$ in $\phi$, then the vertices in $\{v_{\sigma(1)}, v_{\sigma(2)},\ldots,v_{\sigma(k)},u_{\sigma(1),\sigma(2k-1)}\}$ form an ordered $K_{k+1}^-$. On the other hand, if $u_{\sigma(1),\sigma(2k-1)}$ comes before $v_{\sigma(k)}$ in $\phi$, then the vertices in $\{u_{\sigma(1),\sigma(2k-1)}, v_{\sigma(k)}, v_{\sigma(k+1)},\ldots,v_{\sigma(2k-2)},v_{\sigma(2k-1)}\}$ form an ordered $K_{k+1}^-$.
In both cases, we get an ordered $K_{k+1}^-$, so $\phi$ cannot be a $k$-C-E ordering, which contradicts our assumption.
\end{proof}

\section{Verifying a $k$-C-E Ordering} \label{section:verify}

In this section, we prove that even verifying whether an ordering is a $k$-clique-extendible ordering is hard (assuming $k$ is considered as part of the input, rather than a constant).


\defproblem{Verify $k$-C-E Ordering}{Graph $G$, integer $k$ and an ordering $\phi$ of $V(G)$}{Is $\phi$ a $k$-C-E ordering of $G$?}

{\ProbVerify} has a simple $n^{O(k)}$ algorithm as one can enumerate all ${n \choose {k+1}}$ subgraphs isomorphic to $K_{k+1}^-$, and check if any of them are ordered with respect to the ordering.  
We prove that the problem is {\sc co-W[1]}-complete and {\sc co-NP}-complete by a reduction from and to the {\sc Clique} problem, and that the problem also cannot have a $f(k)n^{o(k)}$ algorithm assuming ETH (see \cref{section:prelims} for a definition). 
The reduction maps the YES instances of {\ProbVerify} to the NO instances of {\sc Clique} and vice-versa. 
Hence showing that {\ProbVerify} is {\sc co-NP}-complete. 

%
%
%
%

\begin{theorem}\label{lem:verify_hard}
{\ProbVerify} is {\sc co-W[1]}-complete, {\sc co-NP}-complete and there is no $f(k)n^{o(k)}$ algorithm for it unless {\sc ETH} fails.
\end{theorem}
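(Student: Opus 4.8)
**Proof strategy for Theorem (Verify $k$-C-E Ordering is co-W[1]-complete, co-NP-complete, no $f(k)n^{o(k)}$ algorithm under ETH).**

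The plan is to establish the result by a pair of parameterized reductions between {\ProbVerify} and {\sc Clique}, exploiting the reformulation (stated in the preliminaries) that an ordering is a $k$-C-E ordering if and only if it contains no ordered $K^-_{k+1}$. Since {\ProbVerify} asks for the \emph{absence} of a forbidden pattern, its complement asks for the \emph{existence} of an ordered $K^-_{k+1}$, and it is this complement that I would reduce to and from {\sc Clique}. Membership in {\sc co-W[1]} (and {\sc co-NP}) follows once I exhibit a parameterized reduction from the complement problem---``does $\phi$ contain an ordered $K^-_{k+1}$?''---to {\sc Clique} with the new parameter a function of $k$; hardness follows from a reduction in the reverse direction.

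\medskip\noindent\textbf{Membership (reduction from detecting an ordered $K^-_{k+1}$ to {\sc Clique}).}
First I would build, from $(G,\phi,k)$, an auxiliary graph $H$ whose $(k+1)$-cliques correspond exactly to ordered $K^-_{k+1}$'s of $\phi$ in $G$. The natural construction is to take as vertices of $H$ the ordered pairs (or positions) of $V(G)$ and to encode, via adjacency in $H$, the requirement that a candidate set $\{h_1<_\phi h_2<_\phi\cdots<_\phi h_{k+1}\}$ induces all edges of $K_{k+1}$ except $(h_1,h_{k+1})$. Concretely one can let two vertices of $G$ be ``$H$-compatible'' when the edge/non-edge relation demanded by their relative $\phi$-order is satisfied in $G$, so that a $(k+1)$-clique in $H$ picks out a set realizing the ordered $K^-_{k+1}$ pattern. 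This is an FPT reduction with new parameter $k+1$, placing the complement in {\sc W[1]} and hence {\ProbVerify} in {\sc co-W[1]}; the same construction runs in polynomial time, giving {\sc co-NP} membership.

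\medskip\noindent\textbf{Hardness (reduction from {\sc Clique} to detecting an ordered $K^-_{k+1}$).}
For hardness I would start from a {\sc Clique} instance $(G',\ell)$ and construct a graph $G$ together with an ordering $\phi$ and a value $k$ (with $k$ a function of $\ell$) such that $\phi$ \emph{fails} to be a $k$-C-E ordering of $G$ precisely when $G'$ has an $\ell$-clique. The idea is to lay out $V(G')$ in the ordering $\phi$ and augment the graph with two extra ``anchor'' vertices, placed as the first and last elements of $\phi$, that are made adjacent to every vertex of $G'$ but \emph{not} to each other. Then an $\ell$-clique $\{w_1<_\phi\cdots<_\phi w_\ell\}$ in $G'$, together with the two anchors, forms a set of size $\ell+2$ in which every pair is adjacent except the two anchors---an ordered $K^-_{\ell+2}$---and conversely any ordered $K^-_{k+1}$ must use both anchors as its non-adjacent endpoints (since all other pairs are adjacent), so its interior is an $(\ell)$-clique of $G'$. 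Setting $k=\ell+1$ makes this an equivalence, so YES-instances of {\sc Clique} map to NO-instances of {\ProbVerify} and vice versa, yielding {\sc co-W[1]}-hardness and {\sc co-NP}-hardness. Because $k=\Theta(\ell)$ and {\sc Clique} has no $f(\ell)n^{o(\ell)}$ algorithm under ETH, the same lower bound transfers to {\ProbVerify}.

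\medskip\noindent\textbf{Main obstacle.}
The routine parts are the FPT running-time bookkeeping and the anchor construction. The part needing the most care is the membership reduction: I must design $H$ so that its $(k+1)$-cliques are in exact correspondence with \emph{ordered} $K^-_{k+1}$ subgraphs---capturing both the clique-minus-one-edge adjacency pattern \emph{and} the constraint that the missing edge joins the $\phi$-extreme vertices---rather than with arbitrary copies of $K^-_{k+1}$. Getting this encoding to be faithful in both directions, while keeping the parameter a clean function of $k$, is the crux; once it is set up correctly, the two reductions together deliver co-W[1]-completeness, co-NP-completeness, and the ETH lower bound simultaneously.
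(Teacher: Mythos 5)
Your hardness direction is essentially the paper's reduction (two anchor vertices placed at the extremes of the ordering, adjacent to every original vertex but not to each other), though the justification you give for the converse is not right: it is false that ``all other pairs are adjacent,'' since the original graph may contain non-adjacent pairs, so an ordered $K^-_{k+1}$ need not use both anchors. The equivalence still holds, but for a different reason: any ordered $K^-_{\ell+2}$ contains an $(\ell+1)$-clique, an $(\ell+1)$-clique of the augmented graph uses at most one anchor, and hence yields an $\ell$-clique of the original graph. (The paper argues the contrapositive: if there is no $\ell$-clique, the augmented graph has no $(\ell+1)$-clique, so every ordering is trivially an $(\ell+1)$-C-E ordering.) This is a repairable slip.

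The genuine gap is in the membership direction, and you have put your finger on it yourself without resolving it. The concrete construction you sketch---declare two vertices of $G$ ``$H$-compatible'' when ``the edge/non-edge relation demanded by their relative $\phi$-order is satisfied''---cannot work, because whether a pair $u<_{\phi}v$ of a candidate set must be adjacent or non-adjacent is not a function of the pair: it depends on whether $u$ and $v$ happen to be the two $\phi$-extremes of the whole $(k+1)$-element set. A pairwise compatibility graph on $V(G)$ has no way to encode this global positional constraint, so its $(k+1)$-cliques will not be in correspondence with ordered $K^-_{k+1}$'s. The paper resolves exactly this difficulty by guessing the unique non-edge first: for each non-adjacent pair $(u,v)$ it forms the induced subgraph $G_{u,v}$ on the common neighbours of $u$ and $v$ that lie between them in $\phi$, and takes the disjoint union of all the $G_{u,v}$; an ordered $K^-_{k+1}$ with endpoints $u,v$ is then precisely a $(k-1)$-clique in the component $G_{u,v}$, so the complement problem reduces to {\sc Clique} with parameter $k-1$, giving {\sc co-W[1]} and {\sc co-NP} membership in one stroke. (Your positional idea could instead be salvaged by taking the vertices of $H$ to be pairs consisting of a vertex and a position in $\{1,\dots,k+1\}$ and reducing to {\sc Multicoloured Clique}, but as written your sketch does not do this.)
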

\begin{proof}
We will prove hardness first by giving a reduction from {\sc Clique}. In the {\sc Clique} Problem, we are given a graph $G$ and a positive integer $k$ and asked to check whether there exists a clique of size $k$ in $G$.
	
Given $G$, let $\phi = (v_1,v_2,\ldots,v_n)$ be an arbitrary ordering of its vertices. We construct $G' = (V',E')$, where $V' = V \cup \{a,b\}$ and $E' =  E \cup \{(a,v),(b,v) \mid v \in V\}$. We then ask whether the ordering $\sigma = (a,v_1,v_2,\ldots,v_n,b)$ is a $(k+1)$-C-E ordering of $G'$. We claim that $G$ has a $k$-clique if and only if $\sigma$ is not a $(k+1)$-C-E ordering of $G'$.
	
	
Suppose $G$ has a $k$-clique. Without loss of generality, let the clique be $\{v_1,v_2,\ldots,v_k\}$. Then the vertices in $\{a,v_1,v_2,\ldots,v_k,b\}$ form an ordered $K_{k+2}^-$. Thus we conclude that $\sigma$ is not a $(k+1)$-C-E ordering. 
Conversely, if $G$ does not have a $k$-clique, then $G'$ cannot have a $(k+1)$-clique, so any ordering is trivially a $(k+1)$-C-E ordering of $G'$. 

The above reduction proves that the problem is {\sc co-W[1]}-hard and {\sc co-NP}-hard. It remains to show that the problem is in {\sc co-W[1]} and in {\sc co-NP}. For this, we give a reduction to the {\sc Clique} problem.
	
Given the ordering $\sigma$ of vertices of $G$, we do the following.
For every pair of non-adjacent vertices $(u,v)$, let $V_{u,v}$ be the set of common neighbours of both $u$ and $v$, that appear between $u$ and $v$ in the ordering.
That is, $w \in V_{u,v}$ if and only if $(w,u),(w,v) \in E(G)$ and $u <_{\sigma} w <_{\sigma} v$ or $v <_{\sigma} w <_{\sigma} u$.
Let $G_{u,v}$ be the induced subgraph $G[V_{u,v}]$. Let $\overline E(G)$ denote the pairs of vertices in $G$ which are non-adjacent. We define $G' = \dot \bigcup_{(u,v) \in \overline E(G)} G_{u,v}$. That is, $G'$ is the disjoint union of $G[V_{u,v}]$ for all pairs $(u,v)$ that are non-adjacent. We claim that $G'$ has a $(k-1)$-clique if and ony if $\sigma$ is not a $k$-C-E ordering.
	
Suppose $G'$ has a $(k-1)$-clique $C$. The clique must be in a connected component of $G'$, say $G_{u,v}$. Then the vertices $u$ and $v$ are non-adjacent in $G$, and by construction of $G_{u,v}$, every vertex $w \in C$ is such that $w$ lies between $u$ and $v$ in $\sigma$ and $w$ is a neighbour to both $u$ and $v$. Thus $\{u,v\} \cup C$ forms an ordered $K_{k+1}^-$ in $\sigma$ and hence $\sigma$ is not a $k$-C-E ordering.
Conversely, suppose there is an ordered $K_{k+1}^-$ in $\sigma$. Let $Q$ be the vertices of the ordered $K_{k+1}^-$ and let $u$ and $v$ be its endpoints in $\sigma$. Then $C=Q\setminus \{u,v\}$ forms a $(k-1)$-clique such that every vertex $w \in C$ is a neighbour to both $u$ and $v$ and lies between $u$ and $v$ in $\sigma$. Therefore $C$ forms a $(k-1)$-clique in $G_{u,v}$. 
\end{proof}

%
%


\bigskip

If all the $k$-cliques in a graph can be enumerated in time $O(f(k)poly(n))$ for some function $f$, then we can verify if an ordering is a $k$-C-E ordering in $O(f(k)poly(n))$ time by checking every pair of such cliques to see if they form an ordered $K_{k+1}^-$.
We show that a similar situation happens if $G$ has bounded treewidth
and so the verification problem becomes easy.

%
%
%
%
%

\begin{lemma} \label{lem:clique_bag} (see for example \cite{pa_book})
For any clique $K$ in $G$, there exists a vertex $v \in V(T)$ such that all the vertices of $K$ appear in the bag $B_v$ corresponding to the vertex $v$ in the tree decomposition.
\end{lemma}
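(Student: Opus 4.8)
The plan is to prove this standard fact via the Helly property of subtrees of a tree. For each vertex $u \in V(G)$, recall from the definition of a tree decomposition that $T_u = \{t \in V(T) \mid u \in B_t\}$ induces a connected subtree of $T$ (condition $(iii)$). I would first observe that for the clique $K$, the family $\{T_u \mid u \in K\}$ consists of pairwise-intersecting subtrees: for any two distinct $u, w \in K$ the edge $uw$ lies in $E(G)$, so condition $(ii)$ guarantees a node whose bag contains both $u$ and $w$, whence $T_u \cap T_w \neq \emptyset$. The goal then reduces to showing that pairwise-intersecting subtrees of a tree have a common node, since such a common node $t$ satisfies $K \subseteq B_t$.

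To establish the common node, I would root $T$ at an arbitrary node $r$ and assign to each $u \in K$ the unique highest node (of minimum depth) $\mathrm{top}(u)$ of the connected subtree $T_u$. Uniqueness of $\mathrm{top}(u)$, and the fact that every node of $T_u$ is a descendant of $\mathrm{top}(u)$, both follow from the connectivity of $T_u$: if $T_u$ contained a node that were not a descendant of its highest node, the path joining them would have to pass through a strictly higher node, a contradiction. Let $u^*$ be a vertex of $K$ maximizing $\mathrm{depth}(\mathrm{top}(u^*))$, and set $t^* = \mathrm{top}(u^*)$. I claim that $t^* \in T_u$ for every $u \in K$, which immediately yields $K \subseteq B_{t^*}$.

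For the claim, fix $u \in K$ and pick a node $s \in T_u \cap T_{u^*}$, which is nonempty by pairwise intersection. Then $s$ is a descendant of both $\mathrm{top}(u)$ and $t^* = \mathrm{top}(u^*)$, so both of these nodes lie on the root-to-$s$ path and are therefore comparable in the ancestor order. Since $\mathrm{depth}(\mathrm{top}(u)) \le \mathrm{depth}(t^*)$ by the maximality of $u^*$, the node $\mathrm{top}(u)$ must be an ancestor of $t^*$. Consequently $t^*$ lies on the tree path between $\mathrm{top}(u)$ and $s$, both of which belong to the connected subtree $T_u$; hence $t^* \in T_u$, as desired.

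The main obstacle is the common-node (Helly) argument, and in particular the bookkeeping that pins down $t^*$ as a node common to all the $T_u$; once the rooting and the topmost-node analysis are set up correctly, every other step is immediate from the three defining axioms of a tree decomposition. An alternative route is a direct induction on $|K|$, peeling off one vertex at a time, but this essentially re-derives the same Helly property and I expect the rooted-tree argument above to be the cleanest to write out.
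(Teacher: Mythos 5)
Your proof is correct and complete: the reduction to pairwise-intersecting subtrees via conditions (ii) and (iii), followed by the rooted-tree argument selecting the deepest topmost node, is exactly the standard Helly-property proof of this fact. The paper itself states the lemma without proof and simply cites the textbook \cite{pa_book}, whose argument is the same one you give, so there is nothing to compare beyond noting that your write-up is a faithful and correct expansion of that standard proof.
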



\begin{lemma}[\cite{treewidth_approx}]\label{lem:find_tw}
There exists an algorithm, that given an $n$-vertex graph $G$ and an integer $t$, runs in time $2^{O(t)} \cdot n$ and either constructs a tree decomposition of $G$ of width at most $5t+4$ and $n^{O(1)}$ bags, or concludes that the treewidth of $G$ is greater than $t$.
\end{lemma}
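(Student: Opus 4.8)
The plan is to follow the recursive balanced-separator strategy of \cite{treewidth_approx}. The conceptual engine is the classical fact that bounded treewidth forces small balanced separators: if $tw(G)\le t$ then for \emph{every} weight function $w:V(G)\to\mathbb{R}_{\ge 0}$ there is a set $X$ with $|X|\le t+1$ such that every connected component of $G-X$ has weight at most $w(V(G))/2$ (take a weighted median bag of any width-$t$ tree decomposition). I would use this in the contrapositive: whenever the algorithm looks for a small balanced separator and provably fails to find even an approximately balanced one, it may safely output ``$tw(G)>t$''. To keep all arithmetic honest I would first note that $|E(G)|\le t\cdot n$ whenever $tw(G)\le t$, so if $G$ has more edges the algorithm rejects immediately; hence from now on $m=O(tn)$.

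The first technical ingredient I would build is an FPT subroutine that, given $G$, a set $W$ with $|W|=O(t)$, and $t$, runs in $2^{O(t)}n$ time and either returns a set $X$, $|X|\le t+1$, that is $\tfrac12$-balanced with respect to $W$ (each component of $G-X$ contains at most $|W|/2$ vertices of $W$), or certifies that no such separator exists. The idea is to guess the trace of the sought separator on $W$: a partition of $W$ into $(W_L,W_X,W_R)$ with $|W_L|,|W_R|\le |W|/2$, meaning ``$W_X$ lies inside the separator, while $W_L$ and $W_R$ fall on the two sides.'' For each of the $3^{|W|}=2^{O(t)}$ guesses I would force $W_X$ into the separator and compute a minimum vertex cut between $W_L$ and $W_R$ by $t+2$ rounds of augmenting paths (Menger), costing $O(t\cdot m)=2^{O(t)}n$ in total. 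If the true balanced separator $X$ exists, its own trace is among the guesses and the corresponding minimum cut has size $\le|X|\le t+1$; conversely any separator produced this way separates $W_L$ from $W_R$, so no component of $G-X$ can contain $W$-vertices from both sides, which gives the $\tfrac12$-balance. If every guess yields a minimum cut of size $>t+1$, the subroutine reports failure.

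With this subroutine in hand I would assemble the tree decomposition by a recursion that processes pairs $(H,S)$, where $S\subseteq V(H)$ is the current interface with $|S|\le 3t+4$, and returns a rooted tree decomposition of $H$ of width $\le 5t+4$ whose root bag contains $S$. A base case handles $|V(H)|\le 5t+5$ with a single bag. Otherwise I would invoke the subroutine to obtain a size-$(t+1)$ separator $X$ balanced with respect to $S$, make $S\cup X$ (of size $\le 4t+5$) the root bag, and recurse on $H[C\cup N_H(C)]$ with new interface $N_H(C)$ for each component $C$ of $H-(S\cup X)$; balance with respect to $S$ guarantees $N_H(C)\subseteq S\cup X$ with $|N_H(C)|\le \tfrac12|S|+(t+1)\le 3t+4$, so both the invariant and the width bound are preserved, and the property $N_H(C)\subseteq$ parent bag makes the assembled object a valid tree decomposition. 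If the subroutine ever fails, the balanced-separator fact lets me conclude $tw(G)>t$.

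The hard part, and the reason the full result of \cite{treewidth_approx} is delicate, is twofold. First, I must guarantee genuine \emph{progress}: a separator balanced only with respect to the small interface $S$ need not shrink $H$ (a single component $C$ may inherit almost all of $V(H)$ with $N_H(C)=S\cup X$), so naively the recursion may fail to terminate; reconciling ``bounded interface'' with ``the graph must shrink'' is precisely the combinatorial crux and requires choosing the separator balanced with respect to the processed region as well, together with a potential argument. Second, and more severe, is squeezing the running time down to the claimed \emph{linear} $2^{O(t)}n$ rather than the polynomial bound the recursion above yields directly; this is the technical heart of \cite{treewidth_approx}, obtained through an iterative-compression outer loop (adding vertices one at a time and re-using a maintained decomposition of width $O(t)$) combined with an amortized analysis that charges the separator computations against the shrinking of the graph. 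I would treat these two points as the main obstacles, importing the machinery of \cite{treewidth_approx} to resolve them, while the balanced-separator subroutine and the assembly above supply the structural skeleton.
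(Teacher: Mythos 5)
The paper gives no proof of this lemma: it is imported as a black box from \cite{treewidth_approx} (the single-exponential $5$-approximation algorithm for treewidth), so there is no in-paper argument to compare against, and your sketch follows the standard recursive balanced-separator strategy underlying that reference while, as you yourself flag, deferring the two genuinely hard points --- guaranteeing that the recursion makes progress, and achieving linear rather than polynomial dependence on $n$ via iterative compression and amortization --- to the cited work, which is consistent with how the paper treats the statement. One small arithmetic caveat in your subroutine: a separator whose components each contain at most $|W|/2$ vertices of $W$ only guarantees a trace $(W_L,W_X,W_R)$ with $|W_L|,|W_R|\le \tfrac{2}{3}|W|$ after grouping components into two sides, not $\tfrac12|W|$; the width bookkeeping still closes with the $\tfrac23$ constant, so this does not affect the conclusion.
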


\begin{theorem}
Given an ordering of the vertices of a graph $G$ on $n$ vertices, we can verify whether it is a $k$-C-E ordering of $G$ in time $O(tw^{O(tw)} poly(n))$, where $tw$ is the treewidth of $G$.
\end{theorem}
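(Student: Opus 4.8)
The plan is to turn the verification task into a search for the forbidden substructure identified in the preliminaries, namely an ordered $K_{k+1}^-$: since $\phi$ is a $k$-C-E ordering if and only if it contains no ordered $K_{k+1}^-$, it suffices to decide whether such a substructure exists. The whole strategy rests on the fact that an ordered $K_{k+1}^-$ is built around a small clique, which bounded treewidth forces to sit inside a single bag.

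First I would obtain a tree decomposition by applying \cref{lem:find_tw} with geometrically increasing guesses $t = 1, 2, 4, \ldots$, stopping at the first $t$ that succeeds; the final $t$ is less than $2\,tw$, so in total time $2^{O(tw)} n$ we get a decomposition of width $w = O(tw)$ with $n^{O(1)}$ bags. Next I dispose of large $k$: an ordered $K_{k+1}^-$ with vertices $(h_1, h_2, \ldots, h_{k+1})$ contains the $k$-clique $\{h_1, \ldots, h_k\}$, and a graph of treewidth $tw$ has no clique of size exceeding $tw+1$. Hence whenever $k \ge tw + 2$ there is no ordered $K_{k+1}^-$ at all and $\phi$ is automatically a $k$-C-E ordering, so we may assume $k \le tw + 1$ and in particular that the ``middle'' $C := \{h_2, \ldots, h_k\}$ is a clique of size $k-1 \le tw$.

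The key structural observation is that an ordered $K_{k+1}^-$ is exactly a $(k-1)$-clique $C$ together with two endpoints $a = h_1$ and $b = h_{k+1}$ that are both adjacent to every vertex of $C$, are non-adjacent to each other, and satisfy $a <_\phi c <_\phi b$ for all $c \in C$. Thus it suffices to enumerate every clique $C$ of size $k-1$ and test whether such flanking endpoints exist. I enumerate candidate middle cliques by running over all bags and all of their subsets of size $k-1$: by \cref{lem:clique_bag} every clique lies inside some bag, and each bag contributes at most $2^{w+1} = 2^{O(tw)}$ subsets, so across the $n^{O(1)}$ bags there are at most $2^{O(tw)} n^{O(1)}$ candidate cliques. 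For a fixed $(k-1)$-clique $C$, let $\ell$ and $r$ be its first and last vertices in $\phi$; I form $L$, the common neighbours of $C$ occurring before $\ell$, and $R$, the common neighbours of $C$ occurring after $r$, and then check whether some $a \in L$ and $b \in R$ are non-adjacent. Any such pair yields an ordered $K_{k+1}^-$ (its vertices are $a$, then $C$, then $b$ in $\phi$), and conversely every ordered $K_{k+1}^-$ arises in this way; so $\phi$ fails to be a $k$-C-E ordering if and only if this search succeeds for some $C$.

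For the running time, building and deduplicating the at most $2^{O(tw)} n^{O(1)}$ candidate cliques is within budget, and for each one computing $\ell, r, L, R$ costs $O(n \cdot tw)$ while the non-adjacency check over $L \times R$ costs $O(n^2)$; the total is $2^{O(tw)}\,poly(n) \le tw^{O(tw)}\,poly(n)$, as claimed. The genuinely delicate points, and where I would spend the most care, are (i) verifying that the betweenness requirement in the definition of an ordered $K_{k+1}^-$ collapses cleanly to ``one endpoint left of $\ell$, the other right of $r$,'' which is exactly what lets the endpoint search decouple into the two sets $L$ and $R$, and (ii) the bound on the number of middle cliques, which is what prevents a combinatorial blow-up; this bound is the main thing to state precisely, but it follows immediately from \cref{lem:clique_bag} together with the bag-size bound of \cref{lem:find_tw}.
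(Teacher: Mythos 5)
Your proposal is correct and follows essentially the same route as the paper: bound $k$ by $tw+1$ via the clique--treewidth relation, enumerate all candidate $(k-1)$-cliques inside the bags of an approximate tree decomposition (using \cref{lem:clique_bag} and \cref{lem:find_tw}), and then test in polynomial time for a non-adjacent pair of common neighbours flanking the clique in the ordering. The only differences are cosmetic --- the doubling search for the right width parameter and organizing the endpoint test as $L\times R$ instead of iterating over all non-adjacent pairs --- and your per-bag bound of $2^{O(tw)}$ subsets is, if anything, slightly tighter than the paper's $w^{k}$.
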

\begin{proof}
Due to \cref{lem:clique_bag}, if $tw<k-1$, $G$ will be a trivial $k$-C-E graph as it cannot contain any $k$ sized cliques. So, for the problem to remain non-trivial, we assume that $k-1 \leq tw$.
	
We use \cref{lem:find_tw} to obtain a tree decomposition $\mathcal{T} = (T, \{X_t \}_{t \in V(T)})$ of $G$ of width $w\le 5tw+4$ that has $n^{O(1)}$ bags. We can verify whether an ordering is a $k$-clique-extendible ordering in $O(w^{k} poly(n))$ time as follows. From \cref{lem:clique_bag}, every $(k-1)$ sized clique must appear in one bag. As the bag sizes are bounded by $w+1$, and the number of bags is $n^{O(1)}$, we can enumerate all $(k-1)$-cliques within a bag in $w^k n^{O(1)}$ time. Now, for every such clique $K$ and for every pair of vertices $(u,v)$ in $G$ that are non-adjacent, we check whether $K \subseteq N(u) \cap N(v)$ and the set of vertices of $K$ appear between $u$ and $v$ in the ordering. If they do for at least one such clique $K$ and vertex pair $(u,v)$, we output ``no'', otherwise we output ``yes''.	

The algorithm takes $O(w^k poly(n))$ time. Since $k$ is upper bounded by $tw+1$ and $w = O(tw)$, this runtime is $O(tw^{O(tw)} poly(n))$.
\end{proof}

\section{Hardness of finding clique}\label{section:clique_in_kceo}

There exists an $n^{O(k)}$ algorithm for finding a maximum clique in a $k$-C-E graph~\cite{doublethreshold} when a $k$-C-E ordering is given. In this section, we will prove that this is most likely optimal, that is, we prove that unless ETH fails, there is no $f(k)n^{o(k)}$ algorithm for finding a maximum clique in a $k$-C-E graph \emph{even if the ordering is given.} 
We will reduce from the following problem.

\defproblem{Multicoloured Clique}{Graph $G$, a partition $V_1,\ldots,V_k$ of $V(G)$}{Does there exist a $k$-clique $C$ in $G$ such that $|C \cap V_i| = 1$ for each $i \in [k]$?}

\textsc{Multicoloured Clique} is W[1]-hard and cannot be solved in time $f(k)n^{o(k)}$ unless ETH fails~\cite{pa_book}. 
Given an instance $G,V_1,\ldots,V_k$ of {\sc Multicoloured Clique}, we will first remove all edges that lie within each partition $V_i$. Hence the graph is now $k$-colourable with colour classes $V_1,\ldots,V_k$. Any $k$-colourable graph is also a $k$-C-E graph by Observation \ref{obs:coloring}, and we can use an ordering $\phi$ of the form $V_1+V_2+\cdots+V_k$ to find the maximum clique size of $G$ using an algorithm to find maximum clique in a $k$-C-E graph. If the clique size  is equal to $k$, we output yes, otherwise output no. The following theorem follows.

\begin{theorem}\label{thm:clique_in_kceo_hard}
	Finding a maximum clique in a $k$-C-E graph, even if given a $k$-C-E ordering of the graph, is {\sc NP}-hard, {\sc W[1]}-hard and cannot be solved in time $f(k)n^{o(k)}$ unless {\sc ETH} fails.
\end{theorem}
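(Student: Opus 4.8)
The plan is to give a parameter-preserving reduction from \textsc{Multicoloured Clique}, which is stated to be {\sc W[1]}-hard and to admit no $f(k)n^{o(k)}$ algorithm unless {\sc ETH} fails. Given an instance $(G,V_1,\ldots,V_k)$, I would first delete every edge whose two endpoints lie in the same part $V_i$, obtaining a $k$-partite graph $G'$ with colour classes $V_1,\ldots,V_k$. Since $G'$ is $k$-colourable, \cref{obs:coloring} guarantees that any ordering $\phi$ of the form $V_1+V_2+\cdots+V_k$ is a $k$-C-E ordering of $G'$, so I would hand $(G',\phi)$ to the hypothesised maximum-clique algorithm and answer ``yes'' precisely when the clique it returns has size $k$.

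The correctness hinges on a single structural observation about $k$-partite graphs: a clique of $G'$ can contain at most one vertex from each class, so the maximum clique size of $G'$ is at most $k$, and a clique attains size $k$ if and only if it picks exactly one vertex from every $V_i$, i.e. if and only if it is multicoloured. It then remains to check that deleting the within-class edges destroys no multicoloured clique: a multicoloured clique of $G$ uses only edges joining distinct classes, and all such edges survive the deletion, so it is still a clique in $G'$. Hence $G$ has a multicoloured $k$-clique if and only if the maximum clique of $G'$ has size exactly $k$, which is exactly what the algorithm reports.

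Finally, since the transformation runs in polynomial time and leaves the parameter $k$ unchanged, all three consequences follow at once: a polynomial-time (or even $f(k)n^{o(k)}$-time) maximum-clique algorithm for $k$-C-E graphs given a $k$-C-E ordering would solve \textsc{Multicoloured Clique} within the same resource bounds, contradicting its {\sc NP}-hardness, {\sc W[1]}-hardness, and the {\sc ETH} lower bound in turn. I expect the only genuinely delicate point to be the clique-size bookkeeping of the previous paragraph---ensuring both that no clique of $G'$ can exceed size $k$ and that every multicoloured clique of $G$ is preserved---together with the appeal to \cref{obs:coloring} to certify that the constructed ordering really is a $k$-C-E ordering, so that the black-box algorithm is legitimately applicable.
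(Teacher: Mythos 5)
Your proposal is correct and follows essentially the same route as the paper: reduce from \textsc{Multicoloured Clique}, delete intra-class edges to make the graph $k$-partite, certify the ordering $V_1+\cdots+V_k$ via \cref{obs:coloring}, and test whether the maximum clique has size $k$. The extra bookkeeping you supply (no clique exceeds size $k$; multicoloured cliques survive the edge deletion) is exactly the implicit content of the paper's argument, so there is nothing to add.
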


\section{Finding a $k$-C-E Ordering} \label{section:ordering}

In this section, we consider the following problem and prove the main result of the paper.


\defproblem{Find $k$-C-E Ordering}{Graph $G$, integer $k$}{Is $G$ a $k$-C-E graph?}

Note that this is possibly a harder problem than {\ProbVerify}, but still \cref{lem:verify_hard} doesn't immediately imply even {\sc co-W[1]}-hardness for this problem, as one may be able to determine whether $G$ has a $k$-C-E ordering without even verifying an ordering.
Our main result in this section is to show that {\ProbFind} is NP-hard for each $k \geq 3$. First we will show that {\ProbFind} is {\sc co-W[1]}-hard and {\sc co-NP}-hard. This result rules out algorithms running in time $f(k)n^{o(k)}$ assuming ETH (where as the NP-hardness rules out even $n^{f(k)}$ algorithms assuming P$\neq$NP).



\begin{theorem}\label{thm:ordering_w1hard}
{\ProbFind} is {\sc co-W[1]}-hard and {\sc co-NP}-hard.
\end{theorem}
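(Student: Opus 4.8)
The plan is to give a single polynomial-time, parameter-preserving reduction from {\sc Clique} to the complement of {\ProbFind}, exactly in the spirit of what was done for {\ProbVerify} in \cref{lem:verify_hard}. Since {\sc Clique} is simultaneously {\sc W[1]}-hard (parameterized by the clique size) and {\sc NP}-hard, one such reduction yields both the {\sc co-W[1]}-hardness and the {\sc co-NP}-hardness in a single stroke, and because the new parameter will be $\Theta(k)$ it also transfers the $f(k)n^{o(k)}$ lower bound. Concretely, given a {\sc Clique} instance $(G,k)$ I would build a graph $G'$ together with an integer $k'=\Theta(k)$ so that $G'$ is $k'$-C-E if and only if $G$ has \emph{no} $k$-clique.

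For the clique-to-obstruction direction I would use the forbidden graph of \cref{prop:forbidden}. The idea is to embed $G$ into $G'$ so that every $k$-clique of $G$ is completed to an induced copy of $F_{k'}$: join a fixed clique $K_0$ of the appropriate size to all of $V(G)$ so that any $k$-clique $C$ of $G$ together with $K_0$ forms a $(2k'-1)$-clique, and then add guard vertices realizing the $u_{i,j}$ vertices of $F_{k'}$ on top of such a clique. Whenever $G$ contains a $k$-clique, $G'$ contains $F_{k'}$ as an induced subgraph, so by \cref{prop:forbidden} together with the hereditary property in \cref{obs:hereditary} the graph $G'$ admits no $k'$-C-E ordering. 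Throughout, \cref{obs:reverse} lets me fix the global orientation of any candidate ordering without loss of generality.

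For the converse I would assume $G$ has no $k$-clique and exhibit an explicit $k'$-C-E ordering of $G'$. When $\omega(G)\le k-1$ the joined structure contains no $(2k'-1)$-clique, so no copy of $F_{k'}$ survives; I would then assemble an ordering by first ordering the low-clique-number core (via a colouring-based ordering in the spirit of \cref{obs:coloring}) and then inserting the guard vertices, gluing the pieces together with \cref{lem:separator2}. The separator conditions of \cref{lem:separator2} are exactly what is needed to certify that interleaving the orderings of the parts cannot create an ordered $K^-_{k'+1}$, provided the shared interfaces are cliques and the ``outside'' vertices lie in the correct interval.

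The main obstacle I anticipate is precisely this converse (orderability) direction, and in particular designing the guard vertices carefully enough that they create \emph{no} ordered $K^-_{k'+1}$ other than the intended one: the guards must complete a $(2k'-1)$-clique into $F_{k'}$ when such a clique exists, yet must not, in combination with sub-maximal cliques of $V(G)\cup K_0$ or with one another, form a spurious forbidden configuration when $G$ is $k$-clique-free. Getting the adjacencies of the guard set right (for instance, keeping the guards independent or otherwise of small clique number) and verifying that the interface cliques satisfy the betweenness hypothesis of \cref{lem:separator2} is where the real work lies; the clique-to-obstruction direction is immediate from \cref{prop:forbidden}. Finally I would check that $|V(G')|=n^{O(1)}$ and $k'=\Theta(k)$, so that the reduction is a valid parameterized reduction and the claimed {\sc co-W[1]}- and {\sc co-NP}-hardness follow.
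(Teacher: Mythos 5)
Your construction is essentially the paper's: the paper also reduces from {\sc Clique}, joins a fixed $(k-1)$-clique $K$ to all of $V(G)$, and adds an independent set $I$ of guard vertices $u_{i,j}$ (one for each pair $i<j$ over $V(G)\cup K$, nonadjacent exactly to $v_i$ and $v_j$) so that any $k$-clique $S$ of $G$ yields an induced $F_k$ on $S\cup K$ plus the relevant guards, whence \cref{prop:forbidden} and \cref{obs:hereditary} kill all $k$-C-E orderings. Your forward direction is therefore fine. (One small point: the paper keeps the parameter equal to $k$ itself rather than some $k'=\Theta(k)$, but that is immaterial.)

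The gap is in the converse, which you correctly identify as the crux but then propose to handle with tools that do not apply. \cref{lem:separator2} requires $V_1\cap V_2$ to \emph{separate} $V_1\setminus V_2$ from $V_2\setminus V_1$, i.e.\ no edges between the two sides. In this construction the guards in $I$ are adjacent to almost every vertex of $V(G)\cup K$, and $K$ is joined to all of $V(G)$, so there is no small interface through which to glue; no decomposition of $G'$ into a ``core'' and a ``guard part'' satisfies the separator hypothesis. Likewise a colouring-based ordering in the spirit of \cref{obs:coloring} is unavailable, since $G'$ contains cliques of size up to $\omega(G)+k-1\geq k$ and hence need not be $k$-colourable. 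The paper's converse is in fact much more elementary: assuming $G$ has no $k$-clique, it exhibits the explicit ordering $\pi = V(G)+I+K$ and checks directly that no ordered $K^-_{k+1}$ exists, by analysing where the two nonadjacent endpoints $a<_{\pi}b$ of a hypothetical ordered $K^-_{k+1}$ $Q$ could lie: since $I$ is independent, $Q$ meets $I$ in at most one vertex; if $a\in I\cup K$ then $Q\setminus\{a\}$ would be a $k$-clique inside $K$, impossible as $|K|=k-1$; if $b\in V(G)\cup I$ then $Q\setminus\{b\}$ would be a $k$-clique inside $V(G)$, contradicting the assumption; so $a\in V(G)$ and $b\in K$, but these are adjacent. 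Your worry about ``spurious forbidden configurations'' is legitimate, but it is resolved by exactly this short case analysis (hinging on $I$ being independent and $|K|=k-1$), not by the separator machinery. As written, your proof of the converse direction does not go through.
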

\begin{proof}
We will reduce from the {\sc Clique} problem. Given an integer $k$ and a graph $G$ in which we wish to find a $k$ sized clique, we construct another graph $G'$ such that $G'$ contains the forbidden subgraph $F_k$ (defined in 
\cref{section:prelims}) if and only if it has a clique of size $k$. 

Let $V(G) = \{v_1,v_2,\ldots,v_n\}$. Let $K = \{v_{n+1},v_{n+2},\ldots,v_{n+k-1}\}$ be a $(k-1)$ sized clique such that each vertex in $K$ is connected to every vertex in $V(G)$ and let $I = \{u_{i,j} \mid i,j \in [n+k-1], i<j\}$ such that $u_{i,j}$ is adjacent to all vertices in $\{v_1,v_2,\ldots,v_{n+k-1}\}$ except for $v_i$ and $v_j$. Let $G'$ be the graph where $V(G') = V(G) \cup I \cup K$. 
\begin{claim}
$G$ has a $k$-clique if and only if $G'$ does not have a $k$-C-E ordering.
\end{claim}
\begin{proof}
See \cref{fig:find_whard} for a figure depicting the constructed graph $G'$. Suppose $S$ is a $k$-clique in $G$. Then $S \cup K \cup I$ will have an induced subgraph isomorphic to $F_k$ such that the $(2k-1)$-clique of $F_k$ is $K \cup S$ and the independent set of $F_k$ is a subset of $I$. By \cref{prop:forbidden}, $F_k$ does not have a $k$-C-E ordering and hence, by Observation \ref{obs:hereditary}, $G'$ does not have a $k$-C-E ordering.

Conversely if $G$ does not have a $k$-clique then any arbitrary ordering of $V(G)$ will be a $k$-C-E ordering of $G$. We will argue that any ordering $\pi$ of the form $V(G)+I +K$ is a $k$-C-E ordering of $G'$.

It is enough to prove that there does not exist an ordered $K_{k+1}^-$ in $\pi$. For contradiction, suppose $Q \subseteq V(G')$ forms an ordered $K_{k+1}^-$ in $\pi$. Let $a,b$ be the endpoints of $Q$ in $\pi$ so that $(a,b) \not\in E(G')$ and $a <_{\pi} b$. Let $A = Q\setminus \{b\}$ and $B = Q\setminus \{a\}$. Note that $A$ and $B$ are two $k$-cliques and $A \cup B=Q$. Since $I$ is an independent set (and $k \geq 2$), $Q$ contains at most one vertex from $I$. Therefore, if $a \in I \cup K$, then $B \subseteq K$, which is a contradiction as $|B|>|K|$. Similarly, if $b \in V(G) \cup I$, then $A \subseteq V(G)$, which is a contradiction as $A$ is then a $k$-clique in $G$. We thus have $a \in V(G)$ and $b \in K$. But then $(a,b) \in E(G')$, which is a contradiction.
Therefore, we conclude that there cannot be an ordered $K_{k+1}^-$ in $\pi$.
\end{proof}
\bigskip

The reduction maps the YES-instances of {\sc Clique} to the NO-instances of {\ProbFind} and vice-versa. Hence {\ProbFind} is {\sc co-W[1]}-hard and {\sc co-NP}-hard. 
\end{proof}

\begin{figure}[t!]
	\begin{center}
	\includegraphics[scale=0.7]{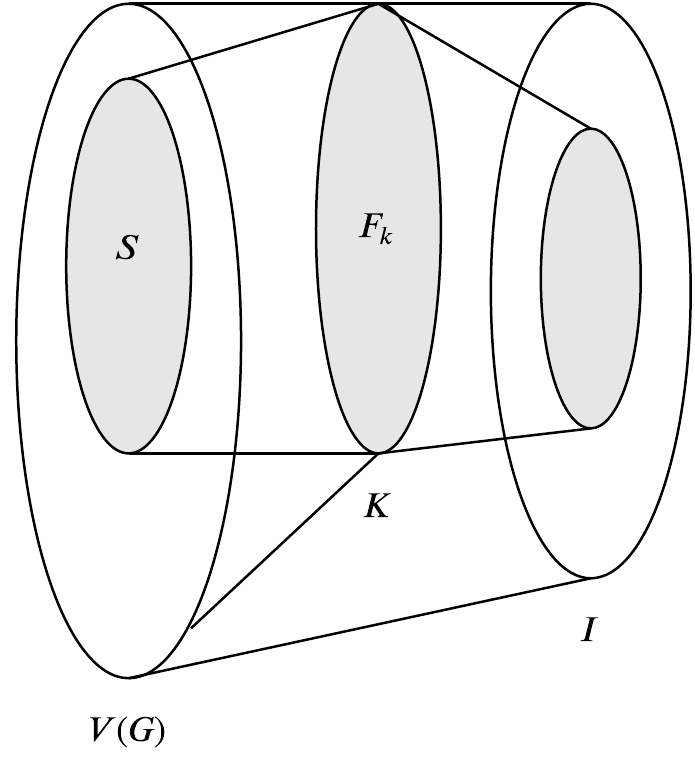}
	\end{center}
\caption{Diagram depicting the reduction for Theorem \ref{thm:ordering_w1hard}. The shaded region shows $F_k$ as an induced subgraph.}
\label{fig:find_whard}
\end{figure}


\subsection{NP-hardness for $k\ge 4$}

We now prove the NP-hardness of {\ProbFind} by a reduction from {\sc Betweenness} defined below. The reduction strategy works for all $k\ge 4$ but not for $k=3$ and so we give a different reduction for $k=3$ in the next section. 

\defproblem{Betweenness}{Universe $U$ of size $n$, and a set of triples $\mathcal T = \{t_1,\ldots,t_m\}$ where each $t_i = (a_i,b_i,c_i)$ is an ordered triple of elements in $U$}{Does there exist an ordering $\phi$ of $U$ such that either $a_i <_{\phi } b_i <_{\phi } c_i$ or $c_i <_{\phi} b_i <_{\phi } a_i$ for each triple $(a_i,b_i,c_i) \in \mathcal T$?}

{\sc Betweenness} is NP-hard~\cite{betweeness_nphard}. To prove our reduction, we will require a gadget that takes as input a graph $G$ and 3 vertices $x,y,z \in V(G)$ and converts them to a modified graph $G'$ in such a way that either $x <_{\phi } y <_{\phi} z$ or $z <_{\phi } y <_{\phi } x$ for any $k$-C-E ordering $\phi $ of $G'$. Moreover, if $\phi $ is a $k$-C-E ordering of $G$ such that $x <_{\phi } y <_{\phi} z$ or $z <_{\phi } y <_{\phi } x$ then $\phi $ is also a $k$-C-E of $G'$. Thus the gadget `prunes' out the orderings of the graph where $y$ does not lie between $x$ and $z$ in the ordering. The $k$-C-E orderings of $G'$ are exactly the $k$-C-E orderings $\phi $ of $G$ where either $x <_{\phi } y <_{\phi} z$ or $z <_{\phi } y <_{\phi } x$.
Thus to construct the reduction, we will start with a graph where all $n!$ orderings are valid $k$-C-E orderings, and apply the gadget for each $(a_i,b_i,c_i) \in \mathcal T$. After applying the gadgets, we will have pruned out all the `bad' orderings and we will remain with exactly the set of orderings in which $b_i$ lies between $a_i$ and $c_i$ for each $i \in [m]$. 
To describe the construction of the gadget, first we need to define an auxiliary graph $\Gamma_k$. 

\bigskip
\noindent\textbf{Definition of the auxiliary graph.} Recall the graph $F_k$, defined in \cref{section:prelims} on the vertex set $K \cup I$ where $K=\{v_1,v_2,\ldots,v_{2k-1}\} $ induces a clique on $2k-1$ vertices, and every vertex in $I$ is indexed by a pair of vertices of $K$ to which the vertex is not adjacent.  Pick arbitrary vertices $v_1$ and $v_2$ of $K$ and let $u_{1,2}$ be the vertex of $I$ that is adjacent to every vertex of $K$ except $v_1$ and $v_2$.  Define $\Gamma_k = F_k \setminus \{u_{1,2}\}$.  
Note that $\Gamma_k$ has $O(k^2)$ many vertices.
\begin{lemma}\label{prop:gamma_endpoints}
In any $k$-C-E ordering $\phi$ of $\Gamma_k$, $v_1$ and $v_2$ are the endpoints of $K$. Furthermore, there exists a $k$-C-E ordering $\phi$ of $\Gamma_k$ such that $v_1$ is the first element in $\phi$ and $v_2$ is the last.
\end{lemma}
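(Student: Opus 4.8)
My plan is to prove the two assertions separately, and throughout I would use the following reformulation of $k$-clique-extendibility specialised to $\Gamma_k$. Since $I$ is an independent set and every vertex $u_{i,j}$ has all of its neighbours inside the clique $K$, the common neighbours of any non-adjacent pair of vertices of $\Gamma_k$ lie entirely in $K$ and therefore automatically form a clique. Recalling that an ordered $K_{k+1}^-$ consists of two non-adjacent endpoints together with $k-1$ pairwise-adjacent common neighbours lying between them, it follows that an ordering $\phi$ of $\Gamma_k$ is a $k$-C-E ordering \emph{if and only if}, for every non-adjacent pair $\{p,q\}$, fewer than $k-1$ common neighbours of $p$ and $q$ lie strictly between $p$ and $q$ in $\phi$. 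The non-adjacent pairs are exactly the pairs $\{u_{i,j},u_{s,t}\}$ inside $I$ (with common neighbours $K\setminus\{v_i,v_j,v_s,v_t\}$) and the pairs $\{u_{i,j},v_\ell\}$ with $\ell\in\{i,j\}$ (with common neighbours $K\setminus\{v_i,v_j\}$).

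For the first assertion I would argue exactly as in the proof of \cref{prop:forbidden}. Writing $\phi|_K=(v_{\sigma(1)},\ldots,v_{\sigma(2k-1)})$, suppose for contradiction that $\{\sigma(1),\sigma(2k-1)\}\neq\{1,2\}$. Then the vertex $u_{\sigma(1),\sigma(2k-1)}$ still exists in $\Gamma_k$, since only $u_{1,2}$ was deleted. Examining whether this vertex lies before or after $v_{\sigma(k)}$ in $\phi$ yields an ordered $K_{k+1}^-$ in precisely the same way as in \cref{prop:forbidden}, contradicting that $\phi$ is a $k$-C-E ordering. Hence $\{\sigma(1),\sigma(2k-1)\}=\{1,2\}$, i.e. $v_1$ and $v_2$ are the endpoints of $K$ in every $k$-C-E ordering.

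For the second assertion I would exhibit an explicit ordering. Fix any ordering of $K$ with $v_1$ first and $v_2$ last, so that $K$ occupies $2k-1$ consecutive slots with $v_1$ leftmost and $v_2$ rightmost, and let $z$ be the median vertex of $K$, i.e. the one having exactly $k-1$ vertices of $K$ on each side. I would then place every $u_{i,j}$ with $2\notin\{i,j\}$ in the gap immediately to the left of $z$, and every $u_{i,j}$ with $2\in\{i,j\}$ in the gap immediately to the right of $z$ (the order within each gap being arbitrary). This leaves $v_1$ as the first and $v_2$ as the last vertex of the resulting ordering $\phi$, as required.

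It then remains to verify the reformulated condition, which I expect to be the routine part once the placement is fixed. For a pair $\{u_{i,j},v_\ell\}$ one checks, using that every vertex of $K$ other than $z$ lies on a definite side of $z$ with only $k-1$ on each side and that $v_1,v_2$ are the extreme vertices of $K$, that at most $k-2$ vertices of $K\setminus\{v_i,v_j\}$ lie strictly between $u_{i,j}$ and $v_i$, and likewise for $v_j$; the side-choice (left unless $2\in\{i,j\}$) is exactly what keeps both counts below $k-1$. For a pair $\{u_{i,j},u_{s,t}\}$ inside $I$ the key observation is that all of $I$ is packed into the two gaps flanking $z$, so the \emph{only} vertex of $K$ that can lie strictly between two $I$-vertices is $z$ itself; hence at most one common neighbour lies between them, and $1<k-1$ for $k\ge 3$. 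Thus no non-adjacent pair has $k-1$ common neighbours between its two elements, so $\phi$ is a $k$-C-E ordering with $v_1$ first and $v_2$ last. The main obstacle is finding this placement in the first place: the constraints from the pairs $\{u_{i,j},v_\ell\}$ force each $u_{i,j}$ to sit close (in $K$-count) to both $v_i$ and $v_j$, pulling the $I$-vertices apart, whereas the constraints from the pairs inside $I$ want all $I$-vertices tightly clustered; packing them into the two gaps around the median of $K$, split according to incidence with $v_2$, is precisely what reconciles these opposing requirements.
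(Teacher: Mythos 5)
Your proof is correct and follows essentially the same route as the paper: the first assertion is argued identically via the vertex $u_{\sigma(1),\sigma(2k-1)}$, and for the second you build the same kind of ordering (the clique $K$ laid out with $v_1$ first and $v_2$ last, and all of $I$ packed into the two gaps flanking the median vertex of $K$), differing only in that you split $I$ by incidence with $v_2$ where the paper splits by incidence with $v_1$, and you verify by directly counting common neighbours rather than by the paper's case analysis on a hypothetical ordered $K_{k+1}^-$. Both placements succeed for the same underlying reason, namely that the deletion of $u_{1,2}$ guarantees that every $I$-vertex placed on a given side of the median is adjacent to the extreme vertex of $K$ on the opposite side, which is exactly what prevents $k-1$ common neighbours from accumulating between a non-adjacent pair.
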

\begin{proof}
Suppose that $\phi$ is a $k$-C-E ordering of $\Gamma_k$ and suppose for contradiction that $v_i$ and $v_j$ are the endpoints of $K$ where $\{i,j\} \neq \{1,2\}$. Then there exists $u_{i,j}\in I$ that is adjacent to every vertex in $K$ except $v_i$ and $v_j$. Let $\phi|_K = (v_{\sigma(1)}, v_{\sigma(2)},\ldots,v_{\sigma(2k-1)})$ where $\sigma$ is a permutation of $\{1,2,\ldots,2k-1\}$ such that ${\sigma(1)} = i$ and ${\sigma(2k-1)} = j$. 

If the vertex $u_{i,j}$ comes after $v_{\sigma(k)}$ in $\phi$, then the vertices in $\{v_{\sigma(1)}, v_{\sigma(2)},\ldots,$ $v_{\sigma(k)},u_{i,j}\}$ form an ordered $K_{k+1}^-$ in $\phi$. On the other hand, if the vertex $u_{i,j}$ comes before $v_{\sigma(k)}$ in $\phi$, then the vertices in $\{u_{i,j}, v_{\sigma(k)}, v_{\sigma(k+1)},\ldots,v_{\sigma(2k-2)},v_{\sigma(2k-1)}\}$ form an ordered $K_{k+1}^-$ in $\phi$.
In both cases, we get a contradiction to $\phi$ being a $k$-C-E ordering. Therefore every $k$-C-E ordering $\phi$ of $\Gamma_k$ is such that $v_1$ and $v_2$ are the endpoints of $K$.

Now we will prove the existence of an ordering $\phi$ of $\Gamma_k$ such that $v_1$ is the first element of $\phi$ and $v_2$ is the last. 
Let $I_1$ be the set of all vertices in $I$ that are not adjacent to $v_1$ and let $I_2=I\setminus I_1$. Note that, since we have removed the vertex $u_{1,2}$ from $F_k$ to get $\Gamma_k$, all vertices in $I_1$ are adjacent to $v_2$ and all vertices in $I_2$ are adjacent to $v_1$.


Consider an ordering $\phi$ of the form $(v_{1},v_{3},v_{4},v_5,\ldots, v_k) + I_1 +  (v_{k+1}) + I_2 + (v_{k+2},v_{k+3},\ldots, v_{2k-2},$ $v_{2k-1},v_2)$. We claim that $\phi$ is a $k$-C-E ordering of $\Gamma_k$. Observe that $v_1$ is the first element of $\phi$ and $v_2$ is the last, thus we will be done once we prove the claim.

Suppose that $\phi$ is not a $k$-C-E ordering of $\Gamma_k$, then there exists $Q \subseteq V(\Gamma_k)$ that induces an ordered $K_{k+1}^-$ in $\phi$. Let $a,b$ be the endpoints of $Q$ in $\phi$ so that $(a,b)\not\in E(\Gamma_k)$ and $a <_{\phi} b$. Let $A=Q\setminus \{b\} $ and $B=Q\setminus \{a\}$. Note that $A$ and $B$ are $k$-cliques and $A\cup B=Q$.

Note that at most one vertex from $I$ can be contained in $Q$ because otherwise $a,b\in I$, which implies that there is at most one vertex (which is $v_k$) between $a$ and $b$ in $Q$, contradicting the fact that $k\geq 3$.
If $a\in I$, then by the above observation, we have $B\subseteq\{v_{k+1},v_{k+2},\ldots,v_{2k-1},v_2\}$. Since $|B|=k$, we have $B=\{v_{k+1},v_{k+2},\ldots,v_{2k-1},v_2\}$, which implies that $a\in I_1$ and $b=v_2$. By our earlier observation, all vertices in $I_1$ are adjacent to $v_2$, which contradicts the fact that $ab\notin E(\Gamma_k)$. We thus have that $a\notin I$. By a symmetric argument, we get that $b\notin I$. Then $a,b\in K$, which again contradicts the fact that $ab\notin E(\Gamma_k)$.
%
%
%
%
%
\end{proof}
\bigskip

\noindent\textbf{The Gadget.} We will use $\Gamma_k$ as a gadget to constrict the set of orderings a graph can have. 
Pick an arbitrary vertex $v_3 \in K$ such that $v_3 \neq v_1,v_2$. Given a graph $G$, applying the gadget on a triplet of vertices $x,y,z \in V(G)$ involves taking the disjoint union of $G$ and $\Gamma_k$ and identifying the vertices $x$ with $v_1$, $y$ with $v_3$ and $z$ with $v_2$ (See Fig. \ref{fig:gamma}). For technical reasons, we will only be applying the gadget on vertices $x,y,z$ that induce a clique in $G$. 
Since $\Gamma_k$ has $O(k^2)$ many vertices, the gadget will add $O(k^2)$ vertices to $G$, keeping it well within a polynomial factor.
We use notation $G' = \mathcal C_k(G,x,y,z)$ to denote ``$G'$ is obtained by applying the gadget on $G$ on vertices $x,y,z$''. The valid $k$-C-E orderings of $G'$ should exactly be the $k$-C-E orderings of $G$ where $y$ does not come between $x$ and $z$. The following lemmas give us exactly that.

\begin{figure}[t]
\begin{center}
	\includegraphics[scale=0.5]{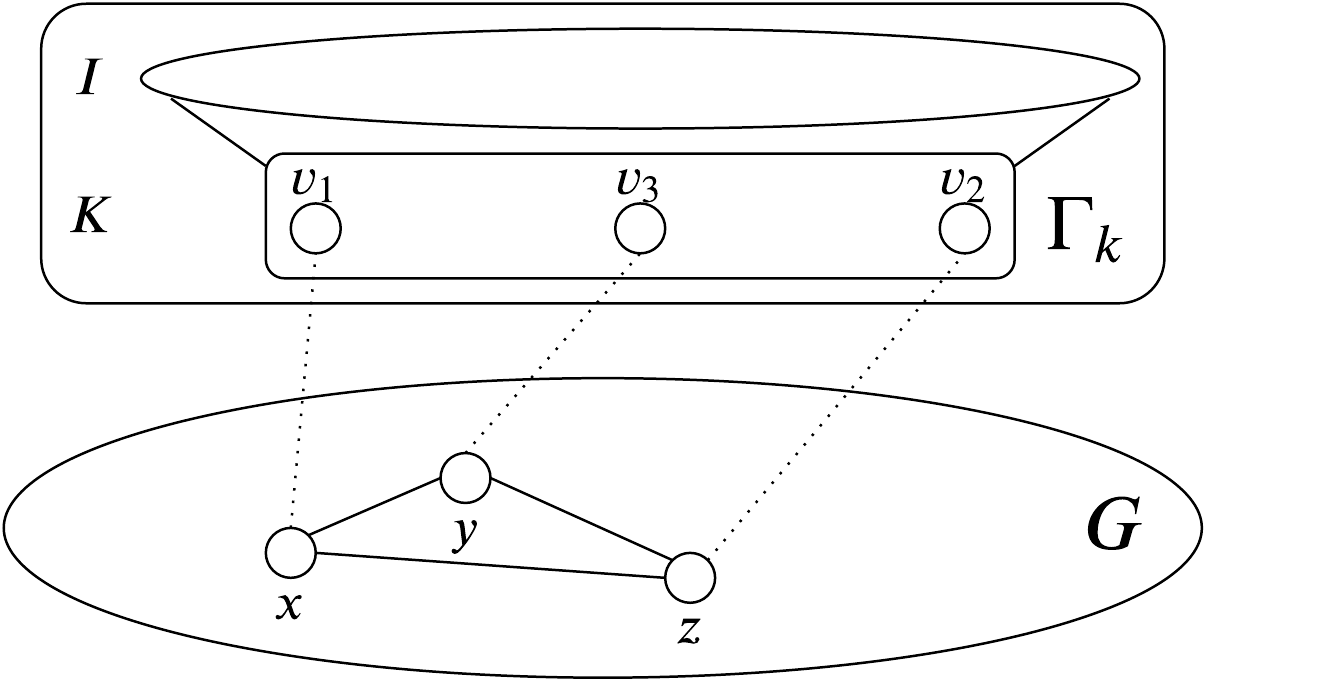}
	\caption{The construction of the gadget. Dotted lines indicate vertices identified to each other.} 
\label{fig:gamma}
\end{center}
\end{figure}

\begin{lemma}\label{prop:ct_constraint}
Let $G$ be a graph and let $x,y,z \in V(G)$ be vertices of $G$. In any $k$-C-E ordering $\phi$ of $G' = \mathcal C_k(G,x,y,z)$, $y$ comes between $x$ and $z$.
\end{lemma}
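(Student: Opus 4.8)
The plan is to exploit the fact that the gadget vertices induce a copy of $\Gamma_k$ inside $G'$, and then invoke \cref{prop:gamma_endpoints} together with the hereditary property of \cref{obs:hereditary}.

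First I would observe that $G'[V(\Gamma_k)]$ is isomorphic to $\Gamma_k$. The only vertices shared between the two parts of the disjoint union are the three identified ones, namely $x = v_1$, $y = v_3$ and $z = v_2$, all of which belong to the clique $K$. Since the gadget is applied only when $x,y,z$ induce a clique in $G$, the identification introduces no edge among the $\Gamma_k$-vertices beyond those already present in $\Gamma_k$ (the three shared vertices are pairwise adjacent in $\Gamma_k$ anyway, being members of $K$). Hence the subgraph of $G'$ induced on $V(\Gamma_k)$ is exactly $\Gamma_k$.

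Next, given any $k$-C-E ordering $\phi$ of $G'$, \cref{obs:hereditary} tells us that $\phi|_{V(\Gamma_k)}$ is a $k$-C-E ordering of $\Gamma_k$. Applying \cref{prop:gamma_endpoints} to this induced ordering, $v_1$ and $v_2$ must be the endpoints of $K$ in $\phi|_{V(\Gamma_k)}$, and therefore also in $\phi|_K$. Since $v_3$ is a vertex of $K$ distinct from $v_1$ and $v_2$, it must occur strictly between $v_1$ and $v_2$ in $\phi$. Translating back through the identification $x = v_1$, $y = v_3$, $z = v_2$, this says precisely that $y$ comes between $x$ and $z$ in $\phi$, which is the desired conclusion.

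The argument is essentially immediate once the reduction to $\Gamma_k$ is set up, so I do not expect a serious obstacle; the only point requiring care is the first step, verifying that the identification does not alter the induced subgraph on the gadget vertices. This is where the hypothesis that $x,y,z$ form a clique in $G$ is used, and it is worth stating explicitly so that \cref{obs:hereditary} can be applied cleanly to transfer the endpoint constraint from $\Gamma_k$ to $G'$.
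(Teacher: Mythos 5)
Your proof is correct and follows essentially the same route as the paper's: identify the induced copy of $\Gamma_k$ in $G'$, apply \cref{obs:hereditary} to restrict the ordering, and invoke \cref{prop:gamma_endpoints} to force $v_1,v_2$ to be the endpoints of $K$ so that $v_3$ lies between them. Your extra care in checking that the identification does not perturb the induced subgraph on the gadget vertices is a reasonable addition but not a departure from the paper's argument.
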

\begin{proof}
$\Gamma_k$ is a subgraph of $G'$. By Observation \ref{obs:hereditary}, any $k$-C-E ordering of $G'$ must induce a $k$-C-E ordering of $\Gamma_k$ within it. Furthermore, $v_3$ comes between $v_1$ and $v_2$ in any $k$-C-E ordering of $\Gamma_k$ by \cref{prop:gamma_endpoints} and since $x,y,z$ are identified with $v_1,v_3,v_2$ respectively, it follows that $y$ is between $x$ and $z$ in any $k$-C-E ordering of $G'$.
\end{proof}

\begin{lemma}\label{prop:gamma}
Let $k \geq 4$ and let $G$ be a graph that has a $k$-C-E ordering $\psi$ such that $y$ comes between $x$ and $z$ for some three vertices $x,y,z \in V(G)$ that form a $3$-clique in $G$, then $G' = \mathcal C_k(G,x,y,z)$ has a $k$-C-E ordering $\phi$ such that $\phi |_{V(G)} = \psi$.
\end{lemma}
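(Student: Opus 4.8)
The plan is to apply the separator lemma, \cref{lem:separator2}, with the gadget $\Gamma_k$ playing the role of $V_1$ and the original graph playing the role of $V_2$. Concretely, I would set $V_1 = V(\Gamma_k)$ and $V_2 = V(G)$, viewed as subsets of $V(G')$ after the identification $v_1 = x$, $v_3 = y$, $v_2 = z$, so that $V_1 \cap V_2 = \{x,y,z\} = \{v_1,v_3,v_2\}$ and $V_1 \cup V_2 = V(G')$. The orientation is the key design choice: since the fourth condition of \cref{lem:separator2} is one-sided, the gadget must sit on the $V_1$ side, so that the condition constrains only the gadget vertices (which I control) rather than the common neighbours of $x,y,z$ inside $G$ (over which I have no control, and which may lie anywhere in $\psi$).

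For $\sigma_2$ I take $\psi$ itself, which is a $k$-C-E ordering of $G'[V(G)] = G$ (the gadget adds no edge inside $V(G)$, since $x,y,z$ already form a clique). For $\sigma_1$ I take the $k$-C-E ordering of $\Gamma_k$ guaranteed by \cref{prop:gamma_endpoints} in which $v_1$ is the first element and $v_2$ is the last; in the case $z <_\psi y <_\psi x$ I use its reverse instead, which remains a $k$-C-E ordering by \cref{obs:reverse}. This choice makes $\sigma_1|_{\{v_1,v_3,v_2\}}$ agree with $\psi|_{\{x,y,z\}}$ (in both, $y = v_3$ lies between $x = v_1$ and $z = v_2$ in the matching direction), which is exactly the third condition.

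Conditions one and two are immediate: $V_1 \cup V_2 = V(G')$, and since $G'$ is a disjoint union of $G$ and $\Gamma_k$ glued only along $\{x,y,z\}$, no edge joins a gadget-only vertex to a $G$-only vertex, so $\{x,y,z\}$ separates $V_1 \setminus V_2$ from $V_2 \setminus V_1$. The crux is the fourth condition, and this is exactly where the hypothesis $k \geq 4$ enters. For $k \geq 5$ the set $V_1 \cap V_2 = \{x,y,z\}$ has only three vertices, hence contains no $(k-1)$-clique, and the condition is vacuous. For $k = 4$ the only $(k-1) = 3$-clique in $V_1 \cap V_2$ is $\{v_1,v_2,v_3\}$ itself, whose endpoints in $\sigma_1$ are $v_1$ and $v_2$; but in the chosen $\sigma_1$ these are the global endpoints of the entire ordering, so every vertex of $V_1 \setminus V_2$ lies strictly between them and the condition holds automatically. (This is precisely what breaks for $k = 3$: then $V_1 \cap V_2$ contains the $2$-cliques $\{v_1,v_3\}$ and $\{v_2,v_3\}$, and I cannot force all of their common neighbours in the gadget to lie between the relevant endpoints — which is why a separate reduction is needed for $k = 3$.)

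With all four hypotheses verified, \cref{lem:separator2} yields a $k$-C-E ordering $\phi$ of $G'$ with $\phi|_{V_1} = \sigma_1$ and $\phi|_{V_2} = \sigma_2 = \psi$, and the latter is exactly the required conclusion $\phi|_{V(G)} = \psi$. The only real obstacle is establishing the fourth condition, and it is dispatched by the two observations that one should orient the gadget as $V_1$ and that \cref{prop:gamma_endpoints} permits pushing $v_1$ and $v_2$ to the two ends of the gadget's ordering.
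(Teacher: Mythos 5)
Your proposal is correct and follows essentially the same route as the paper: both apply \cref{lem:separator2} with $V_1=V(\Gamma_k)$ carrying the ordering from \cref{prop:gamma_endpoints} (with $v_1,v_2$ at the extremes) and $V_2=V(G)$ carrying $\psi$, and both dispose of the fourth condition by noting that for $k\ge 4$ any $(k-1)$-clique in $\{x,y,z\}$ must contain both $x$ and $z$, which are the global endpoints of the gadget's ordering. The only cosmetic difference is that the paper reverses $\psi$ (via \cref{obs:reverse}) to normalize the direction, whereas you reverse $\sigma_1$; these are equivalent.
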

\begin{proof}
By \cref{prop:gamma_endpoints}, $\Gamma_k$ has a $k$-C-E ordering $\sigma$ where $v_1$ and $v_2$ are the first and last elements in $\sigma$ respectively. Since $v_1$ is identified with $x$ and $v_2$ with $z$, we have that $x <_{\sigma} a <_{\sigma} z$ for each $a \in V(\Gamma _k)\setminus\{x,z\}$. We wish to use \cref{lem:separator2} on $\psi$ and $\sigma$ to obtain a $k$-C-E ordering of $G'$. Observe that $V(G) \cup V(\Gamma_k) = V(G')$ and $V(G) \cap V(\Gamma_k) = \{x,y,z\} $ separates $V(G)$ and $V(\Gamma _k)$, thus the first and second conditions in \cref{lem:separator2} hold. Since $x,y,z$ form a clique in both $G$ and in $\Gamma _k$, we have that $G'[V(G)]$ is isomorphic to $G$ and $G'[V(\Gamma _k)]$ is isomorphic to $\Gamma _k$. Thus $\psi$ is a $k$-C-E ordering of $G'[V(G)]$ and $\sigma$ is a $k$-C-E ordering of $G'[V(\Gamma _k)]$. Without loss of generality, we can assume by Observation \ref{obs:reverse}, that $x <_{\psi}y <_\psi z$ in $G$. It also holds that, $x<_{\sigma} y <_{\sigma} z$ since $y$ is identified with $v_3 \in V(\Gamma _k)\setminus \{x,z\} $. Therefore $\psi|_{\{x,y,z\}} = \sigma|_{\{x,y,z\}}$, and the third condition also holds. Let $a$ be a vertex in $V(\Gamma _k)\setminus \{x,y,z\}$. Suppose there exists a clique $C \subseteq V(G)\cap V(\Gamma _k) = \{x,y,z\} $ of size $k-1$ that is adjacent to $a$. Since $k \ge 4$, it follows that $|C| \ge 3$, and thus $x,z \in C$ are the endpoints of $C$ in $\sigma$. By the property of $\sigma$, we have $x<_{\sigma} a <_{\sigma} z$. Thus all four conditions for \cref{lem:separator2} are satisfied and the lemma follows.
\end{proof}

\bigskip

\noindent\textbf{The Reduction.} We are now ready to prove that the problem of checking whether a graph has a $k$-C-E ordering is NP-hard for each $k \geq 4$.

\begin{theorem} \label{thm:paranphardness}
{\ProbFind} is {\sc NP}-hard for each $k \geq 4$.
\end{theorem}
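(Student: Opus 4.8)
The plan is to reduce from {\sc Betweenness}, using the gadget $\mathcal C_k$ developed in \cref{prop:ct_constraint,prop:gamma} as the workhorse. Given a {\sc Betweenness} instance $(U, \mathcal T)$ with $U=\{w_1,\ldots,w_n\}$ and triples $\mathcal T=\{t_1,\ldots,t_m\}$, I would first build a base graph $G_0$ whose vertex set contains one vertex for each element of $U$, arranged so that \emph{every} ordering of $U$ is a valid $k$-C-E ordering and so that the three elements of each triple form a $3$-clique (this last property is required by \cref{prop:gamma}). The cleanest way to guarantee both is to make the $U$-vertices pairwise adjacent, i.e.\ start with $G_0 = K_n$ on $U$: a complete graph is trivially $k$-C-E for every $k$ (it contains no $K^-_{k+1}$ at all, since it has no non-edges), and any three of its vertices trivially induce a $3$-clique, so the precondition of \cref{prop:gamma} is met for every triple.

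Next I would apply the gadget once per triple: starting from $G_0$, for each $t_i=(a_i,b_i,c_i)$ I set $G_i = \mathcal C_k(G_{i-1}, a_i, b_i, c_i)$, taking a fresh disjoint copy of $\Gamma_k$ each time and identifying $a_i,b_i,c_i$ with $v_1,v_3,v_2$. Let $G' = G_m$ be the final graph; since each gadget adds only $O(k^2)$ vertices and $m \le \binom{n}{3}$, the construction is polynomial. The key structural point to verify is that the gadgets compose: the correctness lemmas \cref{prop:ct_constraint,prop:gamma} are each stated for a \emph{single} gadget application, so I would argue by induction on $i$ that a $k$-C-E ordering of $G_i$ restricted to $U$ is exactly an ordering of $U$ in which $b_j$ lies between $a_j$ and $c_j$ for every $j \le i$. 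For this induction to go through cleanly, the gadget copies of $\Gamma_k$ must interact only through their attachment vertices $a_i,b_i,c_i \subseteq U$; since the $U$-vertices separate each $\Gamma_k$-copy from the rest of the graph and from each other, \cref{prop:gamma} (built on the separator \cref{lem:separator2}) applies at each step, and I must check that the $3$-clique precondition still holds after earlier identifications — which it does, because identifications only add edges and $U$ remains a clique.

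The correctness equivalence then reads: $G'$ has a $k$-C-E ordering if and only if $(U,\mathcal T)$ is a YES-instance of {\sc Betweenness}. For the forward direction, \cref{prop:ct_constraint} forces $b_i$ between $a_i$ and $c_i$ for each $i$, so restricting any $k$-C-E ordering of $G'$ to $U$ yields a valid betweenness ordering. For the reverse direction, given a betweenness ordering $\phi$ of $U$, I note $\phi$ is a $k$-C-E ordering of the complete graph $G_0$, and I apply \cref{prop:gamma} $m$ times (again by induction, extending the ordering by interleaving each successive $\Gamma_k$-copy via \cref{lem:separator2}) to produce a $k$-C-E ordering of $G'$ whose restriction to $U$ is $\phi$. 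Since {\sc Betweenness} is NP-hard, this establishes NP-hardness of {\ProbFind} for each fixed $k\ge 4$.

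\medskip
\noindent\textbf{Main obstacle.} The delicate part is the \emph{composition} of multiple gadgets rather than any single application: I expect the real work to be in setting up the induction so that the separator hypotheses of \cref{lem:separator2} hold simultaneously for all $m$ copies of $\Gamma_k$. Concretely, I must ensure that condition~(4) of \cref{lem:separator2} — that every $\Gamma_k$-internal vertex adjacent to a $(k-1)$-clique in the separator lies between that clique's endpoints — continues to hold when the separator is the evolving graph $G_{i-1}$ rather than a static $G$, and that distinct gadgets never create a spurious ordered $K^-_{k+1}$ through shared $U$-vertices. The use of $k\ge 4$ is exactly what makes the single-gadget argument of \cref{prop:gamma} work (it guarantees $|C|\ge 3$ so that $x,z$ are forced to be the endpoints of any relevant separator-clique), and I anticipate this is precisely where the $k=3$ case breaks and a separate reduction is needed.
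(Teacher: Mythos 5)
Your proposal is correct and follows essentially the same route as the paper: start with the complete graph $G_0=K_n$ on $U$, iteratively apply $G_i=\mathcal C_k(G_{i-1},a_i,b_i,c_i)$, use \cref{prop:ct_constraint} (plus heredity) for the forward direction and an induction on $i$ via \cref{prop:gamma} for the reverse. Your added worry about composing gadgets is already absorbed by the fact that \cref{prop:gamma} is stated for an arbitrary host graph, so applying it with $G=G_{i-1}$ at each step is exactly what the paper does.
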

\begin{proof}
We will reduce from {\sc Betweenness}. Let $\mathcal I = (U,\mathcal T)$ be the input {\sc Betweenness} instance. 
We want to construct a graph $G'$ such that $G'$ has a $k$-C-E ordering if and only if the {\sc Betweenness} instance is satisfiable. 
We will construct a graph with vertex set equal to the universe $U$ and apply the gadget for every triple $(a_i,b_i,c_i)$ in $\mathcal T$. We do this iteratively, that is, we first define $G_0$ to be the complete graph on vertex set $U$ and then construct $G_i = \mathcal C_k(G_{i-1},a_i,b_i,c_i)$ for each $i \in [m]$ (where $m$ is the number of triples in $\mathcal T$). The final graph $G'$ is equal to $G_m$. 
There are $m$ many calls to the gadget and each gadget adds $O(k^2)$ vertices to $G'$. So the final size of $G'$ is $O(n+m k^2)$ (where $n$ is the size of $U$), which is polynomial in $n$ and $m$. 

\begin{claim}\label{claim:paranphard}
$G'$ has a $k$-C-E ordering if and only if $\mathcal I$ is a {\sc Yes}-instance.
\end{claim}
\begin{proof}
$\bm{(\Rightarrow)}$ Recall that the graph $G'$ is constructed iteratively as follows. We first define $G_0$ to be the complete graph on the vertex set $U$ and then $G_i = \mathcal C_k(G_{i-1},a_i,b_i,c_i)$ for $i \in [m]$. The final graph $G'$ is $G_m$. Suppose $G'$ has a $k$-C-E ordering $\phi$. Let $\psi = \phi|_U$. We claim that $\psi$ is a valid betweenness ordering for the instance $\mathcal I$. Let $(a_i,b_i,c_i) \in \mathcal T$ be a triple. Consider the subgraph $G_i$ of $G'$. By \cref{prop:ct_constraint}, $b_i$ comes between $a_i$ and $c_i$ in any $k$-C-E ordering of $G_i$. Since $G_i$ is a subgraph of $G$, by Observation \ref{obs:hereditary}, $b_i$ must come between $a_i$ and $c_i$ in any $k$-C-E ordering of $G'$ as well. This holds for every triple $(a_i,b_i,c_i) \in \mathcal T$, thus $\psi$ is a valid betweenness ordering for the instance $\mathcal I$. 

$\bm{(\Leftarrow)}$  Let $\psi$ be a valid betweenness ordering of $U$. We will prove that there exists a $k$-C-E ordering $\phi$ of $G'$ such that $\phi|_U=\psi$. Clearly, $G_0$ has such an ordering. We proceed by induction on $i=1,\ldots,m$. Suppose, by the induction hypothesis, that $G_i$ has a $k$-C-E ordering $\phi_i$ such that $\phi_i|_U=\psi$. Note that $b_i$ comes between $a_i$ and $c_i$ in $\psi$ (and thus also in $\phi_i$) as it is a valid betweenness ordering. 
By \cref{prop:gamma}, $G_{i+1}$ has a $k$-C-E ordering $\phi_{i+1}$ such that $\phi_{i+1}|_{V(G_i)} = \phi_i$ and thus also $\phi_{i+1}|_U = \phi_i|_U = \psi$. Thus the induction step follows.
\end{proof}
\bigskip

The theorem follows from the above claim.
\end{proof}
\bigskip

The above reduction shows that {\ProbFind} is NP-hard. From \cref{thm:ordering_w1hard}, {\ProbFind} is also {\sc co-NP}-hard. Thus it is unlikely that the problem is in NP or in {\sc co-NP}. Moreover, it is easy to verify that the problem lies in $\Sigma_2$, as one can simply guess the ordering $\phi $ and use a {\sc co-NP} machine (\cref{lem:verify_hard}) to check whether $\phi $ is a $k$-C-E ordering. Thus it is an open question whether {\ProbFind} is $\Sigma _2$-complete.
\bigskip

\noindent\textbf{Remark (1)} It is important to note that the problem is \emph{not} {\sc co-NP}-hard when $k$ is a fixed constant as opposed to it being given as a input. When $k$ is fixed, the $k$-C-E ordering itself is an NP certificate for the problem, as given an ordering it is easy to check whether it is a $k$-C-E ordering for constant $k$.
Thus, when $k$ is constant, the problem is NP-complete.
Indeed, the proof of {\sc co-NP}-hardness in \cref{lem:verify_hard} assumes that $k$ is given as an input. 
\medskip

\noindent\textbf{Remark (2)} The reduction in \cref{thm:paranphardness} does not work for $k=3$ due to technicalities that arise in order to satisfy the fourth condition of \cref{lem:separator2}, due to which we require that $|V(G) \cap V(\Gamma _k)| \le k-1$ (see proof of \cref{prop:gamma}). Since $|V(G) \cap V(\Gamma _k)| = |\{x,y,z\}|  = 3$ in the gadgets we construct, this forces $k$ to be at least $4$. We give a separate proof for NP-hardness of $k=3$ in the following section that uses some different ideas.

\subsection{NP-hardness for $k=3$}

In this section, we prove that the problem of finding a $3$-C-E ordering is NP-hard. We will reduce from the {\sc $3$-Colouring} problem.
Given a graph $G$ and an ordering $\phi$ of $V(G)$, we say that three edges $(u,v),(w,x),(y,z)\in E(G)$ form a \emph{disjoint triple} in $\phi$ if $u<_{\phi}v\leq_{\phi} w<_{\phi} x\leq_{\phi} y<_{\phi}z$. Here $x \leq_{\phi} y$ means that either $x=y$ or $x <_{\phi} y$. 
\begin{obs}\label{obs:3-coldisj}
Let $G$ be a $3$-colourable graph and let $C_1,C_2,C_3$ be a partition of $V(G)$ into three independent sets. Then any ordering $\phi$ of $V(G)$ of the form $C_1+C_2+C_3$ contains no disjoint triple.
\end{obs}
\begin{proof}
Suppose that $(u,v),(w,x),(y,z)\in E(G)$ form a disjoint triple in $\phi$, where $u<_{\phi}v\leq_{\phi} w<_{\phi}x\leq_{\phi} y<_{\phi} z$. For a vertex $a\in V(G)$, let $c(a)$ denote the integer $i\in\{1,2,3\}$ such that $a\in C_i$. Since $u<_{\phi}v$, $(u,v)\in E(G)$, and $\phi$ is of the form $C_1+C_2+C_3$, it must be the case that $c(v)\geq c(u)+1$. Since $v\leq_{\phi} w$, we then get $c(w)\geq c(v)\geq c(u)+1$. Similarly, as $(w,x)\in E(G)$, $c(x)\geq c(w)+1\geq c(u)+2$, and further, as $x\leq y$, we get $c(y)\geq c(u)+2$. Continuing in this fashion, since $(y,z)\in E(G)$, we get $c(z)\geq c(u)+3$. But as $c(u)\geq 1$, we now have $c(z)>3$, which is a contradiction.
\end{proof}

\begin{obs}\label{obs:disj3-col}
Let $G$ be any graph. If there is an ordering $\phi$ of $V(G)$ that contains no disjoint triple, then $G$ is $3$-colourable.
\end{obs}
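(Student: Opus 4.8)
The plan is to reverse-engineer the colouring used in the forward direction (\cref{obs:3-coldisj}): there, a proper $3$-colouring produced an ordering along which the colour strictly increases on every edge read left-to-right, so here I will conversely extract such a strictly-increasing colouring from the ordering. First I would orient every edge of $G$ from its $\phi$-smaller endpoint to its $\phi$-larger endpoint. Since this orientation is consistent with the total order $\phi$, the resulting directed graph $D$ is acyclic.

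Next I would colour each vertex $v$ by $c(v)$, defined as the number of vertices on a longest directed path of $D$ ending at $v$. For any arc $a \to b$ of $D$ (that is, any edge with $a <_\phi b$), prepending to $b$ a longest directed path ending at $a$ shows $c(b) \ge c(a)+1$, so $c(a) \ne c(b)$; hence $c$ is automatically a proper colouring, no matter how many colours it uses. Thus it suffices to prove that every directed path of $D$ has at most three vertices, for then $c$ takes values in $\{1,2,3\}$ and $G$ is $3$-colourable.

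The crux is to connect ``no disjoint triple'' with the absence of a directed path on four vertices. Suppose $D$ contained a directed path $a \to b \to c \to d$, using the edges $(a,b),(b,c),(c,d)$ with $a <_\phi b <_\phi c <_\phi d$. I claim these three edges already form a disjoint triple: matching them to $(u,v),(w,x),(y,z)$ via $u=a,\ v=w=b,\ x=y=c,\ z=d$, the required chain $u <_\phi v \le_\phi w <_\phi x \le_\phi y <_\phi z$ holds, with the two $\le_\phi$ relations realised as equalities. Consequently an ordering with no disjoint triple admits no directed path on four vertices, the longest directed path of $D$ has at most three vertices, and the colouring above witnesses that $G$ is $3$-colourable.

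The main obstacle is really just spotting this last correspondence. The definition of a disjoint triple looks weaker than forbidding three chained edges, since it allows the edges to be spread out across $\phi$; but the use of $\le_\phi$ (rather than $<_\phi$) at the two junctions is precisely what lets three consecutive edges of a directed path --- which share their middle endpoints --- qualify as a disjoint triple. Once that is observed, the rest is the standard longest-path labelling of a DAG and requires no further work.
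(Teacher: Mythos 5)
Your proof is correct and follows essentially the same route as the paper: orient each edge from its $\phi$-smaller to its $\phi$-larger endpoint, observe that the absence of a disjoint triple (using the $\le_\phi$ junctions as equalities) forbids a directed path with three arcs, and colour by longest-path length in the resulting DAG. The paper phrases the colour classes as distances from the set of sources rather than longest paths ending at each vertex, but this is the same labelling up to an index shift, and both arguments verify properness of the colouring in the same way.
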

\begin{proof}
We orient the edges of $G$ using $\phi$. The edge $(u,v)$ is oriented from $u$ to $v$ if $u <_{\phi} v$. Let $S$ be the set of vertices which have no incoming arc after the orientation. For a vertex $u$, let $d(u)$ be the maximum distance from a vertex in $S$ to $u$ in the oriented graph. Since there is no disjoint triple in $\phi$, there cannot be a directed path of length $3$ in the oriented graph, thus $d(u) \le 2$ for each $u$. Let $C_i = \{u  \mid d(u)=i\}$. We prove that $C_i$ forms an independent set for each $i \in \{0,1,2\}$ thus $G$ is $3$-colourable. Suppose, for contradiction, there is an edge $(v,w) \in G[C_i]$. Without loss of generality assume that $v <_{\phi} w$ thus the edge is oriented from $v$ to $w$. Now there is a path from $S$ to $v$ of length $i$ and then taking the arc $(v,w)$ forms a path of length $i+1$ to $w$, contradicting that $w \in C_i$.  
\end{proof}
\bigskip


It follows from Observations~\ref{obs:3-coldisj} and~\ref{obs:disj3-col} that a graph $G$ is 3-colourable if and only if there is an ordering of its vertex set containing no disjoint triple.

Another observation is that in any $3$-C-E ordering $\phi$ of $G$, for any pair of non-adjacent vertices $u,v \in V(G)$, the vertices that are adjacent to both $u$ and $v$ and lie between $u$ and $v$ in $\phi$ must be an independent set in $G$. Indeed, if there is an edge $(a,b)$ such that $u <_{\phi} a <_{\phi} b <_{\phi} v$ and $a,b$ are adjacent to both $u$ and $v$, then $G[\{u,a,b,v\}]$ is an ordered $K_4^-$ in $\phi$. This suggests a reduction from {\sc 3-Colouring}. The idea is that, associated to every edge $e=(u,v) \in E(G)$, we will add a vertex $t^e_1$, and a pair of adjacent vertices $t^e_2$ and $t^e_3$.
We will add edges so that the $t_2$ vertices and $t_3$ vertices together form a clique and the $t_1$ vertices form an independent set. We also add edges between all $t_2,t_3$ vertices and $t_1$ vertices. 
We will add a gadget to ensure that $t_1^e,t^e_2,t^e_3$ all lie between $u$ and $v$ in any 3-C-E ordering of $G'$. 

If $G$ is not 3-colourable, then for any ordering $\phi$ of $V(G')$, there will be a disjoint triple in $\phi|_{V(G)}$. If the disjoint triple is formed by the edges $(u,v),(w,x),(y,z)$ of $G$, where $u<_{\phi} v\leq_{\phi} w<_{\phi} x\leq_{\phi} y<_{\phi} z$, then the vertices $t^{(u,v)}_1,t^{(w,x)}_2,t^{(w,x)}_3,t^{(y,z)}_1$ form an ordered $K_4^-$ in $\phi$, and hence there can be no 3-C-E ordering of $G'$. On the other hand, our construction makes sure that if $G$ is a 3-colourable graph, then there exists a 3-C-E ordering for $G'$.
We now describe the reduction in detail.

\medskip

\noindent\textbf{The Construction.} Given a graph $G$, we construct a supergraph $G'$ as explained below (also see Figure~\ref{fig:construction}). For subsets $A,B \subseteq V(G)$, by ``join $A$ and $B$'', we mean that we add all possible edges between vertices in $A$ and vertices in $B$.  
To construct the vertex set of $G'$, we take the vertex set of $G$ and add the following.
\begin{enumerate}
	\item Add 4 sets of vertices $A = \{a,a_1,a_2,a_3\}$, $B=\{b,b_1,b_2,b_3\}$, $C=\{c,c_1,c_2,c_3\}$ and  $D=\{d,d_1,d_2,d_3\}$
	\item Add the sets of vertices $F = \{f^e_i  \mid e \in E(G), i \in \{1,2,\ldots,6\}\}$ and $T = \{t^e_i  \mid e \in E(G), i \in \{1,2,3\}\}$ 
\end{enumerate}

To construct the edge set of $G'$, we take the edge set of $G$ and add the following.

\begin{enumerate}
\item Add edges to make $A,B,C$ and $D$ into cliques of size $4$ each.
\item Add edges to make $t^e_i,f^e_{2i-1},f^e_{2i}$ into a clique, for each edge $e \in E(G)$ and $i \in [3]$
\item Join $\{a_1,a_2,a_3\}$ and $\{b_1,b_2,b_3\}$
\item Join $\{b_1,b_2,b_3\}$ and $\{c_1,c_2,c_3\}$
\item Join $\{c_1,c_2,c_3\}$ and $\{d_1,d_2,d_3\}$
\item Join $\{d_1,d_2,d_3\}$ and $\{a_1,a_2,a_3\}$
\item Join $\{a_1,a_2,a_3,b_1,b_2,b_3\}$ and $V(G)$
\item Join $\{c_1,c_2,c_3,d_1,d_2,d_3\}$ and $V(G) \cup F$
\item Add edges $(f^{(u,v)}_i,u)$ and $(f^{(u,v)}_i,v)$, for each $(u,v) \in E(G)$ and $i \in \{1,\ldots,6\} $
\item Add edges to make $\bigcup_{e \in E(G)} \{t_2^e,t_3^e\}$ into a clique
\item Join $\bigcup_{e \in E(G)} t_1^e$ and $\bigcup_{e \in E(G)} \{t_2^e,t_3^e\}$
\end{enumerate}

\begin{figure}[t]
\begin{center}
	\includegraphics[scale=0.85]{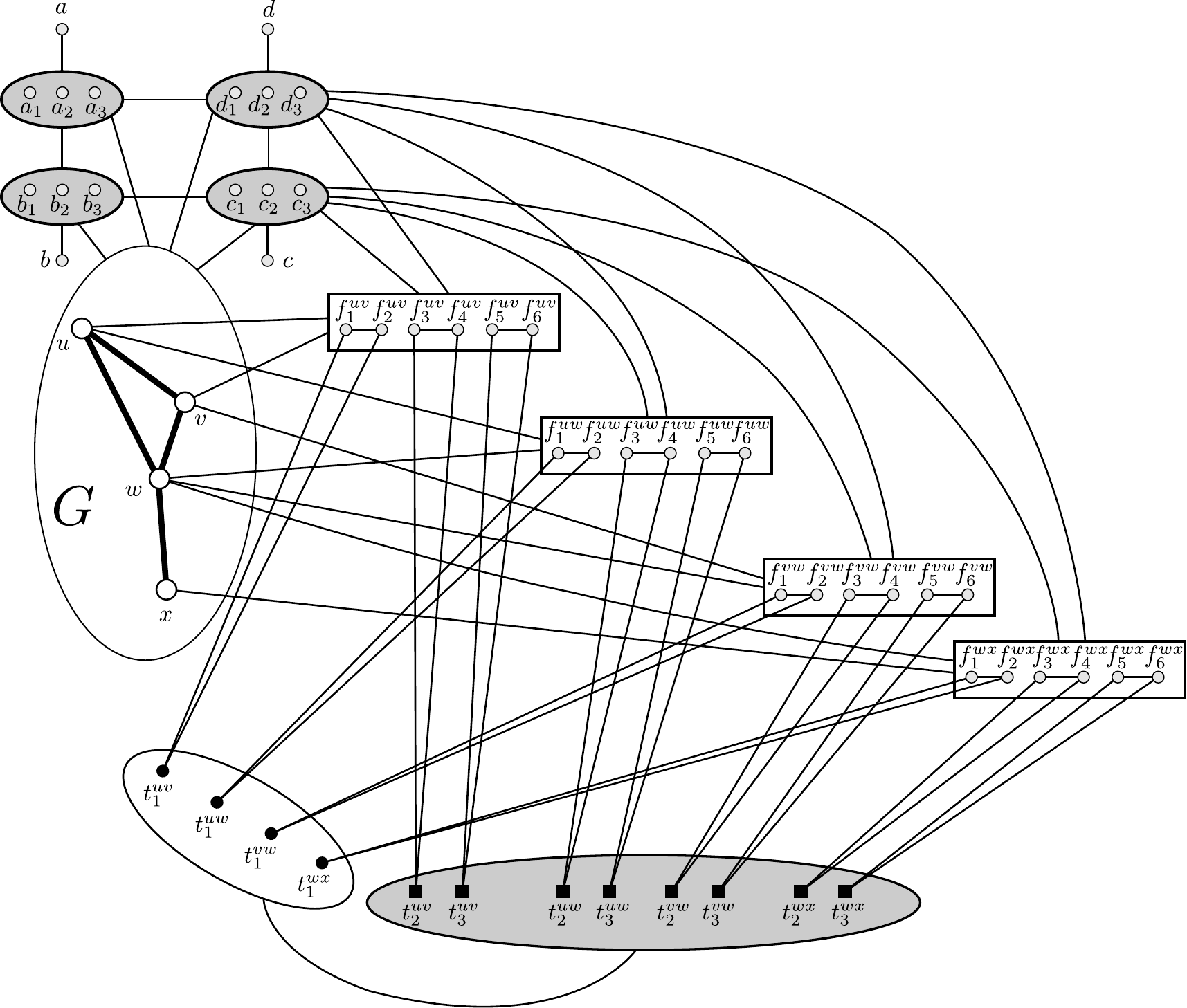}
\caption{The construction of $G'$ from $G$. The vertices inside each shaded block form a clique. An edge between a vertex $u$ and a block means that $u$ is adjacent to every vertex in the block, and an edge between two blocks means that every vertex in one block is adjacent to every vertex in the other block. Note that an edge between vertices $a$ and $b$ is denoted as $ab$ instead of $(a,b)$ to reduce clutter.
}\label{fig:construction}
\end{center}
\end{figure}


\begin{lemma}\label{lem:G'toG}
If $G'$ has a $3$-C-E ordering then $G$ is $3$-colourable.
\end{lemma}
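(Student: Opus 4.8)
The plan is to show the contrapositive's positive direction directly: assuming $G'$ has a $3$-C-E ordering $\phi$, I will produce an ordering $\psi$ of $V(G)$ with no disjoint triple, which by \cref{obs:disj3-col} certifies that $G$ is $3$-colourable. The natural choice is $\psi = \phi|_{V(G)}$. So the entire content of the lemma reduces to proving the following claim: if $\phi$ is a $3$-C-E ordering of $G'$, then $\phi|_{V(G)}$ contains no disjoint triple.

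To establish this claim I would argue by contradiction, exactly mirroring the informal discussion preceding the lemma. Suppose $\phi|_{V(G)}$ does contain a disjoint triple formed by edges $(u,v),(w,x),(y,z)\in E(G)$ with $u<_\phi v\leq_\phi w<_\phi x\leq_\phi y<_\phi z$. The key structural facts I need from the construction are: (i) each of $t^{(u,v)}_1$ and $t^{(y,z)}_1$ lies \emph{between} the endpoints of its associated edge in $\phi$ — so $u<_\phi t^{(u,v)}_1<_\phi v$ and $y<_\phi t^{(y,z)}_1<_\phi z$ — and likewise $w<_\phi t^{(w,x)}_2,t^{(w,x)}_3<_\phi x$; (ii) the pair $t^{(w,x)}_2,t^{(w,x)}_3$ is adjacent, and both are adjacent to every $t^e_1$ vertex (by construction edges 10 and 11); and (iii) $t^{(u,v)}_1$ and $t^{(y,z)}_1$ are non-adjacent, since the $t_1$ vertices form an independent set. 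Given these, the four vertices $t^{(u,v)}_1, t^{(w,x)}_2, t^{(w,x)}_3, t^{(y,z)}_1$ appear in this order in $\phi$ (using the betweenness placement from (i) together with $v\leq_\phi w$ and $x\leq_\phi y$), are pairwise adjacent except for the outermost pair, and hence induce an ordered $K_4^-$, contradicting that $\phi$ is a $3$-C-E ordering.

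The main obstacle, and the step deserving the most care, is justifying the betweenness placement in fact (i): I am implicitly invoking that the gadget forces each $t^e_i$ to lie strictly between $u$ and $v$ in any $3$-C-E ordering. This is the analogue of \cref{prop:ct_constraint} for the $k=3$ construction, and it is precisely what the $A,B,C,D$ cliques together with the $f^e_i$ vertices and the join edges (construction edges 1--9) are engineered to guarantee. I expect this betweenness-forcing property to be either an earlier lemma I may cite or a fact I must extract from the construction; either way, the subtlety is that I need the $t_1$ vertex of one edge and the $t_2,t_3$ vertices of a \emph{different} edge to be forced between their respective edge-endpoints simultaneously, so that when the three edges are disjointly ordered the four gadget vertices are themselves correctly ordered. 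I would isolate this as the crux, stating the betweenness property cleanly and then verifying the adjacency pattern of the ordered $K_4^-$ as a short, essentially mechanical check against the edge list.

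A secondary point to handle carefully is the boundary case where consecutive edges of the disjoint triple share an endpoint (the $\leq_\phi$ rather than $<_\phi$ in the definition): I must make sure the argument does not collapse when, say, $v=w$ or $x=y$. This is harmless because the ordered $K_4^-$ I build uses the gadget vertices $t^{(u,v)}_1$ and $t^{(w,x)}_2,t^{(w,x)}_3$ and $t^{(y,z)}_1$ rather than the shared endpoints themselves, and the strict inequalities $u<_\phi t^{(u,v)}_1$ and $t^{(w,x)}_3<_\phi x\leq_\phi y<_\phi t^{(y,z)}_1$ still separate them strictly; I would note this explicitly so the reader sees the shared-endpoint case causes no overlap among the four chosen vertices.
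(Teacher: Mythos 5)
Your endgame is exactly the paper's: restrict $\phi$ to $V(G)$, invoke \cref{obs:disj3-col}, and derive a contradiction from a disjoint triple by exhibiting the ordered $K_4^-$ on $t^{(u,v)}_1, t^{(w,x)}_2, t^{(w,x)}_3, t^{(y,z)}_1$. The adjacency check and the remark about shared endpoints are both correct. The problem is that you have named the crux --- that in \emph{any} $3$-C-E ordering of $G'$ each $t^e_i$ is forced to lie between the endpoints of $e$ --- and then left it unproven, saying you expect it to be ``either an earlier lemma I may cite or a fact I must extract from the construction.'' There is no such earlier lemma: \cref{prop:ct_constraint} concerns the $\Gamma_k$ gadget used only for $k\ge 4$, and the $k=3$ construction is different precisely because that gadget fails there. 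Establishing the betweenness property is the bulk of the paper's proof, not a citation.

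Concretely, the step cannot be done in one shot, because $t^e_i$ is not adjacent to $u$ or $v$ (nor to any vertex of $A\cup B\cup C\cup D$), so no ordered $K_4^-$ directly involving $t^e_i$, $u$, $v$ exists to exploit. The paper's argument is in three stages: first, a sequence of ordered-$K_4^-$ arguments (using \cref{obs:reverse} and the true-twin structure of $A,B,C,D$) pins down, without loss of generality, that $a_1$ and $c_1$ precede all of $V(G)$ in $\phi$ while $b_1$ and $d_1$ follow all of $V(G)$; second, using $a_1$ and $b_1$ as sentinels, each $f^{(u,v)}_i$ is forced between $u$ and $v$ (\cref{clm:finedge}); third, using $c_1$ and $d_1$ as sentinels (which are joined to $F$ but not to $T$), each $t^e_i$ is forced between $f^e_{2i-1}$ and $f^e_{2i}$ (\cref{clm:ginf}), and hence between $u$ and $v$ by transitivity. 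Without carrying out this chain --- or at least demonstrating why the $A,B,C,D$ cliques and the $F$ layer yield the required sentinels --- the proposal reduces the lemma to an unproved claim that is in fact the entire technical content of the proof.
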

\begin{proof}
Suppose that $\phi$ is a 3-C-E ordering of $G'$. By Observation~\ref{obs:disj3-col}, we only need to show that there is no disjoint triple in the ordering $\phi|_{V(G)}$. We can assume without loss of generality that there exist distinct $i,j\in\{1,2,3\}$ such that in the ordering $\phi$, we have $a_i<_{\phi}a_j<_{\phi}a$ (reversing the ordering $\phi$ if necessary; recall Observation~\ref{obs:reverse}). If there is a vertex $w\in V(G)\cup\{b_1,b_2,b_3,d_1,d_2,d_3\}$ such that $w<_{\phi} a_i$, then $w,a_i,a_j,a$ form an ordered $K_4^-$ in $\phi$, which contradicts the fact that $\phi$ is a 3-C-E ordering. 
Therefore, we can assume without loss of generality that the vertex $a_1$ occurs before every vertex of $V(G)\cup\{b_1,b_2,b_3,d_1,d_2,d_3\}$ in the ordering $\phi$. This also means that if there exist distinct $i,j\in\{1,2,3\}$ such that $b_i<_{\phi}b_j<_{\phi}b$, then $a_1,b_i,b_j,b$ would form an ordered $K_4^-$ in $\phi$. Thus, we conclude that there exist distinct $i,j\in\{1,2,3\}$ such that $b<_{\phi} b_i<_{\phi}b_j$, and arguing as before, we assume without loss of generality that the vertex $b_1$ occurs after every vertex of $V(G)\cup\{a_1,a_2,a_3,c_1,c_2,c_3\}$ in the ordering $\phi$. Now if there exist distinct $i,j\in\{1,2,3\}$ such that $c<_{\phi} c_i<_{\phi} c_j$, then $c,c_i,c_j,b_1$ form an ordered $K_4^-$ in $\phi$. Thus, there exist distinct $i,j\in\{1,2,3\}$ such that $c_i<_{\phi} c_j<_{\phi} c$, and reasoning as before, we can assume without loss of generality that $c_1$ occurs before every vertex in $V(G)\cup\{b_1,b_2,b_3,d_1,d_2,d_3\}$. Using similar arguments, we conclude that $d_1$ occurs after every vertex in $V(G)\cup\{a_1,a_2,a_3,c_1,c_2,c_3\}$ in $\phi$.

\begin{claim}\label{clm:finedge}
For every edge $(u,v)\in E(G)$, all the vertices in $\{f^{(u,v)}_i\colon 1\leq i\leq 6\}$ occur between $u$ and $v$ in $\phi$.
\end{claim}
\begin{proof}
Suppose that there exists $i\in\{1,2,\ldots,6\}$ such that $f^{(u,v)}_i<_{\phi} u<_{\phi} v$. Then the vertices $f^{(u,v)}_i,u,v,b_1$ form an ordered $K_4^-$ in $\phi$, which contradicts the fact that $\phi$ is a 3-C-E ordering. Similarly, if $u<_{\phi}v<_{\phi}f^{(u,v)}_i$ for some $i\in\{1,2,\ldots,6\}$, then $a_1,u,v,f^{(u,v)}_i$ form an ordered $K_4^-$ in $\phi$; again a contradiction. 
\end{proof}

\begin{claim}\label{clm:ginf}
For every edge $e\in E(G)$ and $i\in\{1,2,3\}$, the vertex $t^e_i$ occurs between $f^e_{2i-1}$ and $f^e_{2i}$ in $\phi$.
\end{claim}
\begin{proof}
Since in the ordering $\phi$, $c_1$ occurs before every vertex in $V(G)$ and $d_1$ occurs after every vertex in $V(G)$, it follows from the above claim that $c_1$ occurs before every vertex in $F$ and $d_1$ occurs after every vertex in $F$. Now suppose that for some $e\in E(G)$ and $i\in\{1,2,3\}$, we have $f^e_{2i-1},f^e_{2i}<_{\phi} t^e_i$. Then the vertices $c_1,f^e_{2i-1},f^e_{2i},t^e_i$ form an ordered $K_4^-$ in $\phi$, which is a contradiction. Similarly, if $t^e_i<_{\phi} f^e_{2i-1},f^e_{2i}$, then the vertices $t^e_i,f^e_{2i-1},f^e_{2i},d_1$ form an ordered $K_4^-$ in $\phi$, again a contradiction.
\end{proof}
\bigskip

From the above two claims, it follows that for any edge $(a,b)\in E(G)$ and $i\in\{1,2,3\}$, the vertex $t^{(a,b)}_i$ occurs between $a$ and $b$ in $\phi$. Now suppose for the sake of contradiction that $(u,v),(w,x),(y,z)\in E(G)$ form a disjoint triple in $\phi|_{V(G)}$, where $u<_{\phi} v\leq_{\phi} w<_{\phi}x\leq_{\phi} y<_{\phi}z$. Then we have $u<_{\phi}t^{(u,v)}_1<_{\phi}v\leq_{\phi} w<_{\phi}t^{(w,x)}_2,t^{(w,x)}_3<_{\phi}x\leq_{\phi} y<_{\phi}t^{(y,z)}_1<_{\phi}z$. But then the vertices $t^{(u,v)}_1,t^{(w,x)}_2,t^{(w,x)}_3,t^{(y,z)}_1$ form an ordered $K_4^-$ in $\phi$, a contradiction. 
\end{proof}
\medskip

\begin{lemma}\label{lem:GtoG'}
If $G$ is $3$-colourable then $G'$ has a $3$-C-E ordering.
\end{lemma}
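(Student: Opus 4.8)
The goal is to construct an explicit $3$-C-E ordering $\phi$ of $G'$ assuming $G$ has a proper $3$-colouring $C_1, C_2, C_3$. By Observation~\ref{obs:3-coldisj}, the ordering $\psi = C_1 + C_2 + C_3$ of $V(G)$ has no disjoint triple, and this is the backbone I would build around. The plan is to place the ``frame'' vertices $a_1,a_2,a_3$ (and $c_1,c_2,c_3$) at the very front and $b_1,b_2,b_3$ (and $d_1,d_2,d_3$) at the very back, exactly mirroring the endpoint structure forced in the proof of \cref{lem:G'toG}, with the singleton vertices $a,b,c,d$ tucked just inside their respective triples. Concretely I would aim for an ordering of the form
\[
\{a_1,a_2,a_3\} + \{c_1,c_2,c_3\} + a + c + \big(\text{main block}\big) + d + b + \{d_1,d_2,d_3\} + \{b_1,b_2,b_3\},
\]
where the main block interleaves $\psi$ with the $F$- and $T$-vertices.

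The key design decision is the placement of the edge-gadget vertices. For each edge $e=(u,v)\in E(G)$ with $u<_{\psi} v$, I would insert all six vertices $f^e_1,\dots,f^e_6$ strictly between $u$ and $v$ in the main block, and insert each $t^e_i$ between $f^e_{2i-1}$ and $f^e_{2i}$, so that Claims~\ref{clm:finedge} and~\ref{clm:ginf} are satisfied by construction. The subtle point is the relative order of the $t_2,t_3$ vertices and the $t_1$ vertices across different edges: since $\bigcup_e\{t^e_2,t^e_3\}$ is a clique, the $t_1$ vertices form an independent set, and every $t_1$ is joined to every $t_2,t_3$, I must ensure no ordered $K_4^-$ of the form $t^{(u,v)}_1, t^{(w,x)}_2, t^{(w,x)}_3, t^{(y,z)}_1$ ever appears — this is precisely what the \emph{no disjoint triple} property of $\psi$ prevents, so the burden is to verify that the interleaving respects the $\psi$-order tightly enough that any such bad configuration would force a disjoint triple in $\psi$.

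To show $\phi$ is genuinely a $3$-C-E ordering I would argue there is no ordered $K_4^-$, i.e. no four vertices $p<_\phi q<_\phi r<_\phi s$ with $pq,pr,ps,qr,qs\in E(G')$ but $ps\notin E(G')$. I would organize this as a case analysis on where the two endpoints $p,s$ (the nonadjacent pair) lie among the blocks $\{a_i\}$, $\{c_i\}$, the main block, $\{d_i\}$, $\{b_i\}$, using that $a_1,c_1$ precede everything in the middle and $b_1,d_1$ follow everything. The cliquishness of each of $A,B,C,D$ and the join structure (items 3--8, 10--11) kill most cases immediately because the two middle vertices $q,r$ would have to be adjacent to a nonadjacent $p,s$ pair that the construction simply does not create. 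Rather than enumerate by hand, the cleaner route is to invoke \cref{lem:separator2} repeatedly: treat $\{a_1,a_2,a_3,c_1,c_2,c_3,a,c\}$ and its mirror as separators peeled off the front and back, reducing to the main block, and then within the main block handle each edge gadget as a piece glued along its two endpoints $u,v$.

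The step I expect to be the main obstacle is verifying the fourth condition of \cref{lem:separator2} at each gluing — that every gadget vertex adjacent to a $(k-1)=2$-clique sitting in the shared separator lands strictly between that clique's endpoints. For $k=3$ this is exactly the delicate constraint flagged in Remark~(2), so I would likely avoid \cref{lem:separator2} for the innermost gluings and instead fall back on a direct ordered-$K_4^-$ check for the $F,T$ vertices, leaning on Claims~\ref{clm:finedge} and~\ref{clm:ginf} together with the no-disjoint-triple property of $\psi$ to rule out the only dangerous quadruple, namely the cross-edge $t_1,t_2,t_3,t_1$ configuration.
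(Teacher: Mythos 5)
Your overall architecture matches the paper's: frame vertices from $A,B,C,D$ at the two ends, the $f^e_i$ between the endpoints of their edge $e$, each $t^e_i$ between $f^e_{2i-1}$ and $f^e_{2i}$, and a final application of \cref{lem:separator2} along the separator $F$ to attach the $T$-vertices (that gluing does go through even for $k=3$, since a $2$-clique in $F$ is a pair $f^e_{2i-1},f^e_{2i}$ and the only $T$-vertex adjacent to both is $t^e_i$). The genuine gap is in the step you yourself flag as the crux and then leave to a hoped-for verification: the claim that the no-disjoint-triple property of $C_1+C_2+C_3$ rules out the dangerous ordered $K_4^-$ among $T$-vertices, provided each edge's gadget is placed ``strictly between $u$ and $v$''. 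This is false for such a per-edge local placement. Because $\bigcup_e\{t^e_2,t^e_3\}$ is a single clique joined completely to the independent set $\bigcup_e\{t^e_1\}$, an ordered $K_4^-$ arises as soon as \emph{any} two vertices of $\bigcup_e\{t^e_2,t^e_3\}$ --- not necessarily belonging to the same edge --- lie between \emph{any} two $t_1$-vertices. Overlapping edge intervals produce this with no disjoint triple in sight: for two edges $e_1=(u,w_1)$, $e_2=(u,w_2)$ sharing an endpoint, with $u<_\psi w_1<_\psi w_2$, placing gadget $e_1$ inside $(u,w_1)$ as $\ldots t^{e_1}_1\ldots t^{e_1}_2\ldots t^{e_1}_3\ldots$ and gadget $e_2$ inside $(w_1,w_2)$ yields $t^{e_1}_1<t^{e_1}_2<t^{e_1}_3<t^{e_2}_1$, an ordered $K_4^-$, already on a $2$-colourable path. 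A disjoint triple needs three pairwise non-overlapping edges, so it cannot be the mechanism that excludes these configurations.

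What is missing is a \emph{global} coordination of the gadgets, which is exactly what the paper supplies: split $E(G)$ into $E_1$ (edges meeting $V_1$) and $E_2$ (edges from $V_2$ to $V_3$), put all $E_1$-gadgets into one block between $V_1$ and $V_2$ with the pair index running $3,2,1$, and all $E_2$-gadgets into one block between $V_2$ and $V_3$ with the index running $1,2,3$. Every $f^e_i$ still lies between the endpoints of $e$ (this is where $3$-colourability is actually used in this direction of the reduction), and the restriction of the ordering to $T$ becomes $(t_3\text{'s},t_2\text{'s},t_1\text{'s},t_1\text{'s},t_2\text{'s},t_3\text{'s})$: all $t_1$-vertices are contiguous and flanked by the $t_2,t_3$-vertices, so no two of the latter can sit between two of the former. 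Without discovering this (or an equivalent) arrangement, your construction cannot be completed. Two smaller points: your suggestion to peel off $\{a_1,a_2,a_3,c_1,c_2,c_3,a,c\}$ via repeated applications of \cref{lem:separator2} does not fit that lemma's hypotheses (that set is not an intersection $V_1\cap V_2$ separating the two remainders), and your placement of $a,c$ after both triples differs from the paper's $(a_1,a_2,a_3,a,c_1,c_2,c_3,c)$; both are repairable by a direct ordered-$K_4^-$ check, but the $T$-arrangement is not, as written.
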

\begin{proof}
Let $M = V(G) \cup F$, $L = M \cup A \cup B \cup C \cup D$ and $U = F \cup T$. Note that $L \cap U = F$ and in fact $F$ separates $L$ and $U$. The idea is to apply \cref{lem:separator2} on the subgraphs $L$ and $U$. 
Let $V_1,V_2,V_3$ be a partition of $V(G)$ into three colour classes. Let $E_1 = \{(u,v) \in E(G)  \mid u \in V_1\}$ and $E_2 = \{(u,v) \in E(G)  \mid u \in V_2, v \in V_3\}$. Observe that $\{E_1,E_2\}$ is a partition of $E(G)$. Let $\sigma$ be an ordering of $M$ of the form

\[V_1 + \sum_{i \in 3,2,1} \sum_{e \in E_1} (f^e_{2i},f^e_{2i-1}) + V_2+ \sum_{i \in 1,2,3} \sum_{e \in E_2} (f^e_{2i-1},f^e_{2i}) + V_3\]
Here the sum notation denotes iterated concatenation. For example, the term $\sum_{i \in 3,2,1} \sum_{e \in E_1} (f^e_{2i},f^e_{2i-1})$ would be a shorthand for 

\[(f^{e_1}_6,f^{e_1}_5,f^{e_2}_6,f^{e_2}_5,\ldots,f^{e_1}_4,f^{e_1}_3,f^{e_2}_4,f^{e_2}_3,\ldots,f^{e_1}_2,f^{e_1}_1,f^{e_2}_2,f^{e_2}_1,\ldots)\]
where $\{e_1,e_2,\ldots\}$ are the edges in $E_1$. Similarly, the term $\sum_{i \in 1,2,3} \sum_{e \in E_2} (f^e_{2i-1},f^e_{2i})$ would be

\[(f^{e^1}_1,f^{e^1}_2,f^{e^2}_1,f^{e^2}_2,\ldots,f^{e^1}_3,f^{e^1}_4,f^{e^2}_3,f^{e^2}_4,\ldots,f^{e^1}_5,f^{e^1}_6,f^{e^2}_5,f^{e^2}_6,\ldots)\]
where $\{e^1,e^2,\ldots\}$ are the edges in $E_2$.
Let $\phi$ be the following ordering of $L$.

\[\phi =(a_1,a_2,a_3,a,c_1,c_2,c_3,c) + \sigma + (d,d_1,d_2,d_3,b,b_1,b_2,b_3)\]
We also define an ordering $\psi$ of $U$ as follows.

\[\psi = \sum_{i \in 3,2,1} \sum_{e \in E_1} (f^e_{2i},t^e_{i},f^e_{2i-1})+ \sum_{i \in 1,2,3} \sum_{e \in E_2} (f^e_{2i-1},t^e_i,f^e_{2i})\]

Note that $\psi|_F = \sigma|_F = \phi|_F$. We claim that $\phi$ is a 3-C-E ordering of $G'[L]$ and $\psi$ is a 3-C-E ordering of $G'[U]$. If the claim is true, we are done as the lemma statement will follow from an application of \cref{lem:separator2}. Note that $\psi$ has the property that, if $a \in T$ is adjacent to a $2$-clique $(u,v)$ in $F$, then $u <_{\psi} a <_{\psi} v$. Thus $\psi$ satisfies the last condition for \cref{lem:separator2}.

\begin{claim}
$\psi$ is a 3-C-E ordering of $G'[U]=G'[T \cup F]$.
\end{claim}
\begin{proof}
Suppose that there exists an ordered $K_4^-$ having vertices $u,x,y,v$, such that $u <_{\psi} x <_{\psi} y <_{\psi} v$. Since every vertex in $F$ has degree 2 in $G'[T\cup F]$, only the endpoints of $\{u,x,y,v\}$ can belong to $F$. Thus $\{x,y\}\subseteq T$. Now, the fact that every vertex in $F$ is adjacent to exactly one vertex in $T$ implies that $\{u,x,y,v\} \subseteq T$. Since $(u,v) \not\in E(G')$, we have that $u,v \in \bigcup_{e \in E(G)} \{t_1^e\}$, which implies that $x,y \in \bigcup_{e \in E(G)} \{t_2^e,t_3^e\}$. But in $\psi$, it is impossible for two vertices from $\bigcup_{e \in E(G)} \{t_2^e,t_3^e\}$ to occur between two vertices in $\bigcup_{e \in E(G)} \{t_1^e\}$. Thus we have a contradiction.        
\end{proof}

\begin{claim}
$\sigma$ is a 3-C-E ordering of $G'[M]=G'[V(G) \cup F]$.
\end{claim}
\begin{proof}
Suppose that there exists an ordered $K_4^-$ having vertices $u,x,y,v$ such that $u <_{\sigma} x <_{\sigma} y <_{\sigma} v$. Note that $\sigma|_{V(G)}$ is of the form $V_1+V_2+V_3$, and is therefore a 3-C-E ordering of $V(G)$ by Observation~\ref{obs:coloring}. Thus $\{u,x,y,v\}$ is not contained in $V(G)$, implying that $\{u,x,y,v\}\cap F\neq\emptyset$.
Since $G'[F]$ is a collection of disjoint edges, it follows that $|\{u,x,y,v\} \cap F| \le 2$. If $F$ intersects $\{u,x,y,v\}$ at a single vertex, since every vertex in $F$ has at most $2$ neighbours in $V(G)$, that vertex must be an endpoint of $\{u,x,y,v\}$ and its neighbours in $V(G)$ must be $x,y$, a contradiction to the fact that every vertex in $F$ lies between its two neighbours in $V(G)$ in $\sigma$.
Thus $|\{u,x,y,v\} \cap F| = 2$. If the two vertices in $\{u,x,y,v\}\cap V(G)$ occur consecutively in $u,x,y,v$, then they have common neighbour in $F$ that does not lie between them in $\sigma$, which is a contradiction. If in $u,x,y,v$, there is exactly one vertex in $F$ between the two vertices in $V(G)$, then it means that in $\sigma$, there is a vertex in $V(G)$ between two adjacent vertices in $F$, which is again a contradiction.
Thus we have that $x,y \in F$ and $u,v\in V(G)$. But now $u$ and $v$ are two neighbours of $x\in F$ in $V(G)$ that are not adjacent to each other, which is again a contradiction.
\end{proof}

\begin{claim}
$\phi$ is a 3-C-E ordering of $G'[L]=G'[M\cup A\cup B\cup C\cup D]$.
\end{claim}
\begin{proof}
Suppose that there exists an ordered $K_4^-$ having vertices $u,x,y,v$ such that $u <_{\phi} x <_{\phi} y <_{\phi} v$. As $a$ has no neighbours to its right and no non-neighbours to its left in $\phi$, we have $a\notin\{u,x,y,v\}$. Symmetrically, $b\notin\{u,x,y,v\}$. If $c\in\{u,x,y,v\}$, then since in $\phi$, $c$ has no neighbours to its right and its only neighbours to the left are $c_1,c_2,c_3$, we have $v=c$, $u\in\{a_1,a_2,a_3\}$ and $x,y\in\{c_1,c_2,c_3\}$. But now we have a contradiction to the fact that $(u,x)\in E(G)$. Thus $c\notin\{u,x,y,v\}$. Symmetrically, we also get $d\notin\{u,x,y,v\}$. If $a_i,a_j\in\{u,x,y,v\}$ for some distinct $i,j\in\{1,2,3\}$, then $\{a_i,a_j\}=\{x,y\}$, since $a_i$ and $a_j$ are true twins. But then in $\phi$, there is no common neighbour of $a_i$ and $a_j$ that is not a true twin of theirs to the left of $x$, so $u$ has to be a true twin of $x$ and $y$, which is a contradiction. Therefore, at most one among $a_1,a_2,a_3$ can be present in $\{u,x,y,v\}$, so we shall assume without loss of generality that $a_2,a_3\notin\{u,x,y,v\}$. Using similar arguments, we shall assume without loss of generality that $b_2,b_3\notin\{u,x,y,v\}$, $c_2,c_3\notin\{u,x,y,v\}$ and $d_2,d_3\notin\{u,x,y,v\}$. 

First, suppose $c_1\in\{u,x,y,v\}$, then since $c_1$ has no neighbours to the left of it in $\phi$, we have $c_1=u$, and $v \in \{b,d\}$ as they are the only non-neighbours of $c_1$ to the right of it. This is a contradiction to our earlier observation that $b,d\notin\{u,x,y,v\}$. Thus $c_1 \not\in \{u,x,y,v\}$. Symmetrically, $d_1 \not\in \{u,x,y,v\}$.     

Now suppose that $a_1\in\{u,x,y,v\}$. Then since $a_1$ has no neighbours to its left, we have $u=a_1$. As $(u,v)\notin E(G')$, we now have $v\in F$. 
Also, since $x$ and $y$ are neighbours of $u$, we have $x,y\in V(G)\cup\{b_1\}$. Since $x$ and $y$ occur before $v$ in the ordering, we further have that $x,y\in V(G)$. But now we have the contradiction that in $\phi$, the vertex $v\in F$ does not lie between its two neighbours $x,y\in V(G)$. Thus $a_1\notin\{u,x,y,v\}$. Symmetrically, we have $b_1\notin\{u,x,y,v\}$. 
Thus no vertex from $A \cup B \cup C \cup D$ can be in $\{u,x,y,v\}$, which implies that $\{u,x,y,v\}\subseteq M$. But this means that $u,x,y,v$ form an ordered $K_4^-$ in $\sigma$, which a contradiction to the previous claim.
\end{proof}
\bigskip
	
\noindent This proves that if $G$ is 3-colourable, then $G'$ has a 3-C-E ordering.
\end{proof}

\bigskip

Lemma~\ref{lem:G'toG} and Lemma~\ref{lem:GtoG'} prove the correctness of the reduction and thus we have the following theorem.

\begin{theorem}
	{\ProbFind} is {\sc NP}-hard for $k=3$.
\end{theorem}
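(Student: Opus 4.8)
The plan is to assemble the two preceding lemmas into the statement of NP-hardness for $k=3$. The reduction is from \textsc{$3$-Colouring}, which is well known to be NP-hard, so it suffices to exhibit a polynomial-time computable map sending a graph $G$ to the supergraph $G'$ described in \textbf{The Construction}, and then verify that $G$ is $3$-colourable if and only if $G'$ admits a $3$-C-E ordering. First I would confirm that the construction is polynomial: we add the four constant-size cliques $A,B,C,D$, six $F$-vertices and three $T$-vertices per edge of $G$, along with the edges listed in items (1)--(11); this adds $O(|E(G)|)$ vertices and $O(|V(G)|\cdot|E(G)|+|E(G)|^2)$ edges, so $G'$ is built in time polynomial in the size of $G$.

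The core of the argument is the biconditional, and both directions are already supplied. For the forward direction I would invoke \cref{lem:G'toG}: if $G'$ has a $3$-C-E ordering, then $G$ is $3$-colourable. For the reverse direction I would invoke \cref{lem:GtoG'}: if $G$ is $3$-colourable, then $G'$ has a $3$-C-E ordering. Together these give $G$ is $3$-colourable iff $G'$ is a $3$-C-E graph, i.e. iff $(G',3)$ is a \textsc{Yes}-instance of \ProbFind. Since the reduction runs in polynomial time and \textsc{$3$-Colouring} is NP-hard, it follows that \ProbFind{} is NP-hard for $k=3$, which is exactly the theorem.

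The real content, and the step I expect to be the main obstacle, is of course hidden inside the two lemmas rather than in their assembly. In \cref{lem:G'toG} the delicate part is showing that the auxiliary clique gadgets $A,B,C,D$ together with the forbidden ordered $K_4^-$ structure force a canonical layout of the ordering (i.e.\ pinning $a_1,b_1,c_1,d_1$ to the extremes relative to $V(G)$), and then propagating this through \cref{clm:finedge} and \cref{clm:ginf} so that each $t^e_i$ is trapped between the endpoints of $e$; the translation of a disjoint triple into an ordered $K_4^-$ on $t^{(u,v)}_1,t^{(w,x)}_2,t^{(w,x)}_3,t^{(y,z)}_1$ then closes the loop via \cref{obs:disj3-col}. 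In \cref{lem:GtoG'} the obstacle is constructing the explicit interleaved orderings $\phi$ of $L$ and $\psi$ of $U$ that agree on the separator $F$ and satisfy the fourth hypothesis of \cref{lem:separator2}, together with checking that each exhibited ordering is genuinely $K_4^-$-free. Once both lemmas are in hand, however, the theorem itself requires nothing more than citing them, so at this level the proof is a one-line consequence.
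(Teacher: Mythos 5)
Your proposal is correct and matches the paper exactly: the paper also derives the theorem directly from \cref{lem:G'toG} and \cref{lem:GtoG'}, with the polynomial-time construction of $G'$ being immediate from its description. Nothing further is needed.
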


\section{Conclusion} \label{section:conclusion}
We have shown that the problem of determining whether a given graph is a $k$-C-E is NP-hard for each $k \geq 3$ and {\sc co-NP}-hard for general $k$. Finding a maximum clique in a $k$-C-E graph on $n$ vertices is known to have an $n^{O(k)}$ algorithm when a $k$-clique-extendible ordering is given, which we prove to be optimal. 
It is also an open problem mentioned before~\cite{spinrad} whether we can find a maximum clique in a $k$-C-E graph in polynomial time for a fixed $k$, if given only the adjacency matrix of the graph. 
Finally, it would be interesting to know polynomial time solvable problems in $k$-C-E graphs, even for $k=3$.
As triangle free graphs and diamond-free graphs are $3$-C-E graphs, we know that the independent set problem and the colouring problem are NP-hard in these classes of graphs.

It would also be interesting to study whether these graphs can be recognised approximately. There are two suitable notions for approximation. One is the following: An algorithm is said to be an $\alpha $-factor approximation (for $\alpha \ge 1$) if, given a graph $G$ and integer $k$, it either outputs a $(\alpha k)$-C-E ordering or concludes that no $k$-C-E ordering exists for $G$. The second notion is the following: An algorithm is said to be a $\alpha $-factor approximation (for $\alpha \le 1$) if, given a graph $G$ and integer $k$, outputs an ordering $\phi $ such that at most $\alpha $ fraction of the induced $K_{k+1}^-$ in the graph are ordered in $\phi $. Note that solving this problem for $\alpha =0$ is equivalent to solving {\ProbFind}.

For the second notion of approximation, there is an easy $(\frac{2}{k(k+1)})$-factor approximation. Simply output a random ordering of the vertices of $G$. The probability that any given induced $K_{k+1}^-$ is ordered is $\frac{2}{k(k+1)}$. Thus by linearity of expectation, a $\frac{2}{k(k+1)}$ fraction of all the induced $K_{k+1}^-$ in $G$ will be ordered.  

\bibliography{refs}

\end{document}